\newtheorem{corollary}{Corollary}
\newtheorem{example}{Example}
\newtheorem{lemma}{Lemma}
\newtheorem{proposition}{Proposition}
\newtheorem{remark}{Remark}
\newenvironment{proof}[1][Proof]{\textbf{#1.} }{\ \rule{0.5em}{0.5em}}
\newcommand{\df}[1]{\textit{#1}}
\def\Re{\mathbb R}
\def\NE{\mathrm{NE}}
\def\U{\mathcal{U}}
\def\mbf{\mathbf}
\newcommand{\wtd}[1]{\widetilde{#1}}
\begin{document}

\title{Manipulation of Belief Aggregation Rules}
\author{Christopher P. Chambers, Federico Echenique, Takashi Hayashi}
\date{\today}
\maketitle

\begin{abstract}
This paper studies manipulation of belief aggregation rules in the setting where the society first collects individual's probabilistic opinions and then solves a public portfolio choice problem with common utility based on the aggregate belief.

First, we show that belief reporting in Nash equilibrium under the linear opinion pool and log utility is identified as the profile of state-contingent wealth shares in parimutuel equilibrium with risk-neutral preference.

Then we characterize belief aggregation rules which are Nash-implementable. We provide a necessary and essentially sufficient condition for implementability, which is independent of the common risk attitude.
\end{abstract}

\section{Introduction}
When people disagree on prior beliefs about uncertain states of the world in a stubborn manner, as is common in environmental issues and health issues, the society needs to aggregate those beliefs in order to make a pubic decision.\footnote{By ``stubborn,''  we mean we are working outside the framework after  \citet{harsanyi1967games,harsanyi1968games} and \cite{aumann76} that differences in beliefs should come only from difference in information received.}

Such a problem will be set in the following timeline: the society first collects individuals' probabilistic opinions and aggregates them with certain rule, and then makes a public decision based on the aggregate belief.

A natural problem arising there is that each individual may have an incentive to misreport his/her belief, particularly by exaggerating it, so as to manipulate the aggregate belief and resulting public decision in favor of his/her own belief. As a result, we observe an apparent polarization in reported beliefs. Such problem seems quite pervasive in environmental issues and health issues.

\bigskip

To illustrate, imagine that the society aggregates individuals' reported beliefs with the linear opinion pool and then solves a public portfolio choice problem in which the individuals have a common taste and common risk attitude.
\begin{example}
There are two possible states, 1 and 2. There are two individuals, $a$ and $b$. Let $p_{i}$ denote individual $i$'s 
truly believed subjective probability of state 1, where $i=a,b$.
The individuals have a common interest in collective wealth and have common risk attitude described by vNM index $u(z)=\ln z$.  This index is adopted in the collective decision as well. 

Suppose the group aggregates their subjective probabilities according to the linear opinion pool: 
\[f(\rho_{a},\rho_{b})=\frac{\rho_{a}+\rho_{b}}{2}.\]
Here, this equation specifies the aggregate probability of state 1, where $\rho_{i}$ denotes $i$'s reported probability of state 1, which may or may not be equal to $p_{i}$.

Society solves a portfolio choice problem:
\[\max_{x}\frac{\rho_{a}+\rho_{b}}{2}\ln x+
\left(1-\frac{\rho_{a}+\rho_{b}}{2}\right)\ln (1-x),\]
where we assume for simplicity that the initial collective wealth is 1 unit and that state-prices of the two states are the same.
Then the optimal portfolio choice given $(\rho_{a},\rho_{b})$ is 
simply 
\[x(\rho_{a},\rho_{b})=\frac{\rho_{a}+\rho_{b}}{2}.\]

Thus, each individual's payoff function in the game of belief reporting is
\[U_{i}(\rho_{a},\rho_{b})=p_{i}\ln \left(\frac{\rho_{a}+\rho_{b}}{2}\right)+ (1-p_{i})\ln \left(1-\frac{\rho_{a}+\rho_{b}}{2}\right), \ \ \ i=a,b\]

Then $a$'s best 
response is
\[BR_{a}(\rho_{b})=
\left\{\begin{array}{ll}
1 & \text{if} \ \ 0 \le \rho_{b} \le 2p_{a}-1\\
2p_{a}-\rho_{b} & \text{if} \ \  
2p_{a}-1 \le \rho_{b} \le 2p_{a} \\
0 & \text{if} \ \ 2p_{a} \le \rho_{b} \le 1
\end{array}
\right.
\]
where the first and the third cases are mutually exclusive. Likewise, $b$'s best 
response is
\[BR_{b}(\rho_{a})=
\left\{\begin{array}{ll}
1 & \text{if} \ \ 0 \le \rho_{a} \le 2p_{b}-1\\
2p_{b}-\rho_{a} & \text{if} \ \ 
2p_{b}-1 \le \rho_{a} \le 2p_{b} \\
0 & \text{if} \ \ 2p_{b} \le \rho_{a} \le 1
\end{array}
\right.
\]

Without loss, assume $p_{a}\le p_{b}$. Then pure-strategy Nash equilibria $(\rho^{\ast}_{a},\rho^{\ast}_{b})$ are given by:
\[(\rho^{\ast}_{a},\rho^{\ast}_{b})=
\left\{\begin{array}{ll}
(0,2p_{b}) & \text{if} \ \  p_{a}<p_{b}\le \frac{1}{2}\\
(0,1) & \text{if} \ \ p_{a}\le \frac{1}{2}\le p_{b} \ \ \text{and}
\ \ p_{a} <p_{b}\\
(2p_{a}-1,1) &  \text{if} \ \ \frac{1}{2} \le p_{a}<p_{b} \\
\text{any} \ \ \rho_{a}+\rho_{b}=2p &\text{if} \ \ p_{a}=p_{b}=p.
\end{array}
\right.
\]
The resulting aggregate belief is independent of the actual Nash equilibrium, and is uniquely given by
\[f^{\ast}(p_{a},p_{b})=
\left\{\begin{array}{ll}
p_{b} & \text{if} \ \  p_{a}<p_{b}\le \frac{1}{2}\\
\frac{1}{2} & \text{if} \ \ p_{a}\le \frac{1}{2}\le p_{b} \ \ \text{and}
\ \ p_{a} <p_{b}\\
p_{a} &  \text{if} \ \ \frac{1}{2} \le p_{a}<p_{b} \\
p &\text{if} \ \ p_{a}=p_{b}=p
\end{array}
\right.
\]
which is the median of $\{p_{a},p_{b},\frac{1}{2}\}$, counting possible replications.

\end{example}

Thus, in equilibrium we typically observe a polarized distribution of \emph{reported} beliefs regardless of the distribution of truly believed beliefs, and the resulting aggregate belief is the median of three beliefs, where 1/2 is a ``phantom'' voter.

\bigskip

Motivated by the problem illustrated with the previous example, we propose a model of public decision making and belief aggregation. We assume that society first aggregates individuals' probabilistic beliefs according to a certain rule, an opinion pool, and then solves a public portfolio choice problem in which everybody has a common interest. All individual agents wish to increase social wealth, and have the same risk attitude. 

The portfolio choice setting is an abstraction meant to capture the problem of choosing a public policy in the face of uncertainty, when agents' differ about their prior beliefs. The problem can involve, for example, environmental or health policies. Our assumption that agents share tastes and risk attitudes is done for the purpose of isolating, and focusing on, the issue of belief aggregation.

We assume that individual agents disagree about their prior beliefs, and that  there is  \textit{complete information} about it among them. The social planner, on the other hand, is ignorant about agents' beliefs.  In other words, 
agents knows who is exaggerating their beliefs while the social planner does not.

We tackle two questions. One is a positive question regarding the social outcome when individual agents play the game of reporting their beliefs in a non-cooperative manner.  As suggested by the example, it is expected that each individual exaggerates their belief in order to manipulate the aggregation rule and the resulting aggregate belief. When there are more than two states, and more than two individuals, however, it is not clear what is meant by ``exaggerating'' one's belief. For the general case of many states and many individuals, we show that belief reporting in Nash equilibrium under the linear opinion pool and log utility is identified as the profile of state-contingent wealth shares in parimutuel equilibrium with risk-neutral preference.

The second question we tackle is what the class of belief aggregation rules that are implementable looks like. We provide a necessary condition for implementability, and show that it is also sufficient under a mild unanimity condition. Remarkably, we show that the condition for implementability is equivalent to a tractable condition that is independent of the agents' common risk attitude.

\subsection*{Related Literature}
To our knowledge, there are two types of existing studies on manipulation of aggregation under stubborn disagreements in subjective beliefs.

One is to propose a `reduced-form' definition that an agent wants to manipulate the aggregate belief to be `closer' to his/her belief, which does not specify 
the public decision problem to be solved after the aggregation
or the individuals' underling preferences, where the modeller imposes an exogenous measure of `closeness' (\cite{varloot2022level} and \cite{dietrich2023impossibility}).

We chose to explicitly model a particular public decision problem and the agents' underlying preferences over outcomes, as 
we view that we cannot make a prediction of actual manipulation behaviors without specifying them. 
In order to isolate the issue of manipulating belief aggregation rules from the rest, we consider a setting that the individuals have a common interest in the economic outcomes and a common risk attitude. 

The other is to allow heterogeneity and manipulation in tastes over outcomes as well as in beliefs \citep{bahel2020strategyproof}. They consider a social choice function as a mapping from the set of profiles of subjective expected utility preferences over \textit{all} Savage acts into the set of \textit{all} Savage acts (\cite{savage72}, \cite{anscombeaumann63}). They show that it satisfies strategy-proofness and unanimity only if it is a state-by-state dictatorship. Their result relies on the product-set structure of the set of all Savage acts and separable preferences, while they could allow state-dependent feasible sets and state-dependent outcome preferences, which rules out dealing with transferring resource from one state to another. It should be noted that the set of portfolio choices (i.e., the budget set) in our setting is not a product set, and even expected utility preferences does not allow defining separability when they are restricted to such non-product set.

\bigskip

Apart from the problem of manipulation, there is a huge literature on aggregating beliefs/priors and aggregating preferences under disagreements in beliefs in general. The literature in statistics, where an aggregation rule is called an opinion pool, is comprehensively surveyed by \cite{genestzidek86}. Among various alternative opinion pools, the linear opinion pool \cite{stone61} has received substantial attention, and was axiomatically characterized by \cite{mcconway81}. 

The parimutuel method of belief aggregation was proposed by \cite{eisenberggale59}. The parimutuel method is a market-like mechanism, in which the aggregate probability incorporates individual beliefs through the form of market betting odds. 

When people disagree on both beliefs and tastes, it is known that  only dictatorship can satisfy both the Pareto principle applied to uncertain prospects and 
subjective expected utility theory \citep{mongin1995consistent}. This is another reason why we restrict attention to the case where individuals have the same taste over outcomes and the same risk attitude.





\section{Belief Aggregation and Public Portfolio Choice}\label{sec:model}
Let $S$ be a finite set of states of the world. A \df{belief} is a probability distribution over $S$. Let  $\Delta(S)$ denote the set of all probability distributions over $S$. Let $I=\{1,\cdots,n\}$ be the set of individuals, who may differ in their beliefs. 

A \df{belief aggregation rule} is a mapping $f: \Delta(S)^{I} \rightarrow \Delta(S)$.

A \df{social outcome} is an element of $\mathbb{R}_{+}$, and a state-contingent outcome, or \df{portfolio}, is a function $x:S\to\Re_+$. 

We may interpret outcomes as social wealth, and assume that all individuals have a common interest in social wealth being larger rather than smaller.
We assume that individuals share the same risk attitudes as well, which are described by a von-Neumann/Morgenstern index 
$u: \mathbb{R}_{+} \rightarrow \mathbb{R}$. We assume that $u$ is  twice-continuously differentiable on $\mathbb{R}_{++}$ and satisfies $u^{\prime}>0$ and $u^{\prime\prime}<0$, $\lim_{z\rightarrow 0}u^{\prime}(z)=\infty$ and $\lim_{z \rightarrow \infty}u^{\prime}(z)=0$. Given a portfolio $x$, the function $s\mapsto u(x_s)$ is a \df{utility act}.
 
Society solves a portfolio choice problem, given an aggregate belief. For simplicity, we focus on the case of a linear budget set, or set of feasible state-contingent outcomes:
\[X=\{x \in \mathbb{R}^{S}_{+}: \pi \cdot x = w\}.\]
The set $X$ is the set possible portfolio choices, where $\pi \in \mathbb{R}^{S}_{++}$ is a price vector for Arrow securities, and $w>0$ is the initial social wealth. 
What is essential here is that the dimension of $X$ is $|S|-1$.


Given an aggregate belief $p \in \Delta(S)$, let 
\[x(p)=\arg\max_{x \in X} \sum_{s\in S} u(x_{s})p_{s}\]
be the socially optimal portfolio, and let 
\[u(p)=(u(x_{s}(p)))_{s \in S}\]
be the optimal utility act obtained when society adopts $p$ as its aggregate belief. It is worth noting that the objective function $\sum_{s\in S} u(x_{s})p_{s}$ is additively separable over $\mathbb{R}^{S}_{+}$, but \textit{not over} $X$.

\bigskip

We can describe the collective decision as a choice over utility acts. 
Let $\mathcal{U}=\{(u(x_{s}))_{s \in S} \in \mathbb{R}^{S}: x \in X\}$ be the set of possible utility acts induced by portfolio choices. For each $p \in \Delta(S)$, 
define
\[u(p)=(u(x_{s}(p)))_{s \in S} .\]
to be the utility act that is chosen as optimal when the society adopts belief $p$.

The claim below, which is standard, states that there is one-to-one correspondence between adopted belief and resulting portfolio choice, and also between adopted belief and resulting utility act.
\begin{proposition}
For all $x \in X$ there is unique $p \in \Delta(S)$ such that $x(p)=x$.

For all $u \in \mathcal{U}$ there is unique $p \in \Delta(S)$ such that $u(p)=u$.
\end{proposition}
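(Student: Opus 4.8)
The plan is to prove that $p\mapsto x(p)$ is a bijection from $\Delta(S)$ onto $X$, and then deduce the statement about utility acts by composing with the coordinatewise map induced by $u$.

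First I would nail down that $x(p)$ is well-defined. Put $T=\{s:p_s>0\}$. The objective $\sum_{s}u(y_s)p_s$ does not depend on the coordinates $y_s$ with $s\notin T$, so any maximizer over $X$ must satisfy $y_s=0$ for $s\notin T$ (otherwise one could transfer that wealth to a state in $T$ and strictly raise the objective). On the face $X_T=\{y\in X:y_s=0\ \forall s\notin T\}$, which is compact, the objective reduces to $\sum_{s\in T}p_su(y_s)$ with all $p_s>0$ and $u$ strictly concave, hence is strictly concave, so the maximizer is unique. The condition $\lim_{z\to0}u'(z)=\infty$ forces this maximizer into the relative interior of $X_T$, so it is characterized by the first-order condition: there is $\mu>0$ with $p_su'(x_s)=\mu\pi_s$ for every $s\in T$ (and $x_s=0$ for $s\notin T$). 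In particular $\{s:x_s(p)>0\}$ coincides with the support of $p$.

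Injectivity then follows quickly: if $x(p)=x(p')=x$, then both $p$ and $p'$ have support $T=\{s:x_s>0\}$, and on $T$ each of them is proportional to $s\mapsto\pi_s/u'(x_s)$ with the constant of proportionality fixed by summing to $1$; hence $p=p'$. For surjectivity, take $x\in X$; since $\pi\cdot x=w>0$ we have $x\neq0$, so $T:=\{s:x_s>0\}\neq\emptyset$, and we may define $p\in\Delta(S)$ by $p_s=0$ for $s\notin T$ and $p_s=\bigl(\pi_s/u'(x_s)\bigr)\big/\sum_{t\in T}\bigl(\pi_t/u'(x_t)\bigr)$ for $s\in T$. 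By construction $x$ lies in the relative interior of $X_T$ and satisfies the first-order condition above, which by concavity is sufficient for optimality; combined with the support argument this yields $x(p)=x$.

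For the second claim, $u$ is strictly increasing, so $x\mapsto(u(x_s))_{s\in S}$ is a bijection of $X$ onto $\mathcal{U}$; composing it with $p\mapsto x(p)$ exhibits $p\mapsto u(p)$ as a composition of bijections $\Delta(S)\to X\to\mathcal{U}$, hence itself a bijection, so each $u^{*}\in\mathcal{U}$ equals $u(p)$ for exactly one $p\in\Delta(S)$. The only point requiring care throughout is the boundary bookkeeping: using the Inada condition to rule out zero wealth in a positive-probability state, and the budget constraint to force zero wealth in a zero-probability state. This is precisely what makes $\mathrm{supp}(p)=\mathrm{supp}(x(p))$ and lets the first-order conditions determine $p$ from $x(p)$; everything else is standard convex analysis.
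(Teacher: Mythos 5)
The paper states this proposition without proof, labeling it as standard, so there is no in-text argument to compare against. Your proof is correct and is precisely the standard argument the authors have in mind: the Inada condition and the wealth-transfer argument pin down $\mathrm{supp}(x(p))=\mathrm{supp}(p)$, the first-order condition $p_s u'(x_s)=\mu\pi_s$ on the support plus normalization gives injectivity and surjectivity of $p\mapsto x(p)$, and strict monotonicity of $u$ transports the bijection to $\mathcal{U}$.
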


Under the belief aggregation rule $f$, when a profile of beliefs $\mathbf{p} \in \Delta(S)^{I} $ is reported by the individual agents (truthfully or not), society chooses a portfolio $x(f(\mathbf{p}))$. This portfolio choice results in the utility act $u(f(\mathbf{p}))\in \U$.

\section{The Linear Opinion Pool Preference Revelation Game and a Parimutuel Market}\label{sec:nasheq}

The \df{symmetric linear opinion pool} is the belief aggregation rule defined by $f(\mathbf{p}) = \frac{1}{n}\sum_{i=1}^n p_i$.    Assume that $u(z)= \ln(z)$, and that $\pi=(1,1,\ldots,1)$, with $w=1$; therefore $X$ is itself a ``probability simplex.''

Under these assumptions, the following is well-known:

\begin{proposition}
For all $p\in \Delta(S)$, $x(p)=\arg\max_{x\in X} \sum_{s\in S}p_{s}\ln(x_{s})=\{p\}$.
\end{proposition}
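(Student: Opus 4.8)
The plan is to solve the concave program $\max_{x\in X}\sum_{s\in S}p_s\ln x_s$ directly, where by our standing assumptions $X=\{x\in\Re_+^S:\sum_{s\in S}x_s=1\}$ is the simplex. The whole statement amounts to computing the maximizer explicitly and showing it is unique.

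First I would reduce to the case where $p_s>0$ for every $s$. If $p_s=0$ for some $s$, then the term $p_s\ln x_s$ contributes nothing (with the usual convention $0\cdot\ln 0=0$), and any optimal $x$ must place zero weight on such a state: otherwise shifting that weight to a state with positive probability strictly raises the objective. Hence $x_s=0=p_s$ there, and the problem restricts to the face of $X$ supported on $\{s:p_s>0\}$, so it suffices to treat the interior case.

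Second, with all $p_s>0$, the belief $p$ is itself feasible and attains the value $\sum_s p_s\ln p_s$. For any feasible $x$ with all coordinates positive, the elementary inequality $\ln t\le t-1$ (equality iff $t=1$) gives
\[
\sum_{s\in S}p_s\ln x_s-\sum_{s\in S}p_s\ln p_s=\sum_{s\in S}p_s\ln\frac{x_s}{p_s}\le\sum_{s\in S}p_s\Bigl(\frac{x_s}{p_s}-1\Bigr)=\sum_{s\in S}x_s-\sum_{s\in S}p_s=0,
\]
with equality precisely when $x_s=p_s$ for all $s$. Feasible $x$ with some coordinate equal to $0$ give objective value $-\infty$ and are dominated. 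Therefore $x=p$ is the unique maximizer, i.e. $x(p)=\{p\}$. (Equivalently, one can run a Lagrangian/KKT argument: the first-order conditions $p_s/x_s=\lambda$ force $x_s=p_s/\lambda$, the budget constraint $\sum_s x_s=1$ then pins down $\lambda=1$, and strict concavity of $x\mapsto\sum_s p_s\ln x_s$ on the convex set $X$ yields uniqueness.)

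The only real bookkeeping, and the one mildly delicate point, is the behavior at the boundary of the simplex where $\ln$ is $-\infty$; the Inada condition $\lim_{z\to 0}u'(z)=\infty$ built into our assumptions on $u$ is exactly what guarantees the maximizer lies in the relative interior of the support of $p$, so this issue is cosmetic rather than substantive.
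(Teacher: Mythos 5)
Your proof is correct: the reduction to the support of $p$, the Gibbs-inequality step $\sum_s p_s\ln(x_s/p_s)\le 0$ with equality iff $x=p$, and the KKT alternative are all sound. The paper itself states this proposition as ``well-known'' and supplies no proof, so there is nothing to compare against; your argument is the standard one and fills that gap adequately.
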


More generally, for arbitrary $\pi$ or $w$, the solution gives $\frac{wp_{s}}{\pi_{s}}$ in each state, that is, the barycenter of $X$ with coordinates $p$.

Define the \df{log utility preference revelation game} for profile $(p_1,\ldots,p_n)$ to be the game $(\Delta(S),U_{p_i})_{i\in N}$ in which each player $i\in N$ has $\Delta(S)$ as their strategy space, and payoff function 
\[
U_{p_i}(\rho_1,\ldots,\rho_n) = \sum_{s\in S} p_{i,s } \log\left(\frac{1}{n}\sum_{j=1}^n \rho_{j,s}\right).
\]
Observe that the payoff function is well defined under the convention that $0\times \log (0)=0$: if, for some $s\in S$,  $\sum_j p_{j,s} = 0$, then $p_{j,s}=0$ and  $\log(0)$ does not enter into the sum for any individual~$j$.  Technically speaking, this utility function can take the value $-\infty$, which will not occur at an optimum.

To understand the Nash equilibria of the belief revelation game $(\Delta(S),U_{p_i})_{i\in N}$ we need to introduce an auxiliary betting market. The model is borrowed from \cite{eisenberggale59}, but it boils down to the classical notion of Walrasian equilibrium for linear preferences and nominal wealth, as opposed to endowments.

A \df{parimutuel market} is a collection $(\Re^S_+,p_i)_{i\in N}$ that describes a collection of consumers $i\in N=\{1,\ldots,n \}$. Each consumer has $\Re^S_+$ as consumption space, and a linear utility function $x\mapsto p_i\cdot x$. We interpret $x\in\Re^S_+$ as state-contingent consumption, and the linearity of utility $p_i\cdot x$ as risk neutrality. 

A \df{parimutuel equilibrium with equal wealth} is a tuple $(\rho, \mathbf{x})$ in which $\rho\in\Delta(S)$ is a \df{price vector}, and $\mathbf{x} = (x_1,\ldots,x_n)\in (\Re^S_+)^N$ satisfies:
\begin{enumerate}
\item\label{it:opt} If $\rho \cdot y \leq 1/n$, then $p_i \cdot x_i \geq p_i \cdot y$.
\item\label{it:supdem} $\sum_{i=1}^n x_i = (1,\ldots,1)$.
\end{enumerate}
The latter means that $\mathbf{x}$ is an \df{allocation} of the aggregate quantity $(1,\ldots,1)$. Condition~\eqref{it:opt} means that agent $i$ optimizes their linear utility function by choosing $x_i$ in a budget set defined by prices $\rho$ and ``nominal wealth'' $1/n$.\footnote{Observe that we could replace the ``nominal wealth'' hypothesis with the hypothesis that each individual is endowed $(1/n,1/n,\ldots,1/n)$, that is; the parimutuel equilibrium with equal wealth is the same as a Walrasian equilibrium from equal division.} Condition~\eqref{it:supdem} means that markets clear for the consumption in every state.

The wealth constraint of $1/n$ is what allows us to conclude that $\rho$ is a probability measure.\footnote{If we used the Walrasian model with endowments, then the price would have to be renormalized in general in order to ensure its coordinates sum to one.}

\begin{proposition}\label{prop:EG}
A profile $(\rho_1,\ldots,\rho_n)$ is a Nash equilibrium for the log utility preference revelation game $(\Delta(S),U_{p_i})_{i\in N}$ if and only if there is a parimutuel equilibrium with equal wealth $(\rho, \mathbf{x})$ of the parimutuel market $(\Re^S_+,p_i)_{i\in N}$ with: \[
\rho_{is} = \frac{\rho_{s}x_{is}}{\sum_{s^{\prime}}\rho_{s^{\prime}}x_{is^{\prime}}}
\] for all $i\in N$.  \\
In particular $f(\rho_1,\ldots,\rho_n)=\rho$.
\end{proposition}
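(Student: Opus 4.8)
The plan is to prove the two directions by matching first-order (Karush--Kuhn--Tucker) conditions, using the change of variables $x_{is}=\rho_{is}/(n\rho_s)$ — equivalently $\rho_{is}=\rho_s x_{is}/\sum_{s'}\rho_{s'}x_{is'}$ once one knows the parimutuel budget binds — as the dictionary between the game and the market.

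First, some preliminary reductions. Call a state $s$ \emph{null} if $p_{is}=0$ for every $i$. I will observe that null states contribute nothing on either side: in any Nash equilibrium each player optimally puts zero weight on a null state, since mass there is wasted and strictly better placed on a state $s'$ with $p_{is'}>0$ (one exists because $p_i\in\Delta(S)$); and in any parimutuel equilibrium the price of a null state must be zero, since a positive price there would be money an agent never spends, contradicting market clearing, while the allocation on such a state is irrelevant. So I may assume every state is non-null. Next I will show that in both objects the aggregate/price vector is strictly positive coordinatewise: if $\rho_s=0$ at a non-null state, the agent $i$ with $p_{is}>0$ has an unbounded linear program in the parimutuel market, and in the game the same agent can strictly raise $U_{p_i}$ by shifting an arbitrarily small mass onto $s$; either way equilibrium fails. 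Consequently $\rho_s>0$ throughout, the parimutuel budget binds for every agent (as $p_i\in\Delta(S)$ makes $p_i\cdot x$ strictly increasing in some coordinate and $\rho_s>0$ everywhere rules out free utility), and the map $x_{is}\leftrightarrow\rho_{is}$ is then a well-defined bijection preserving ``$x_{is}>0\iff\rho_{is}>0$'' and sending $\sum_s\rho_{is}=1$ to $\rho\cdot x_i=1/n$, and conversely.

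For the equivalence proper I will write out player $i$'s best-response problem in the game as maximizing $\sum_s p_{is}\log\bigl(a_s+\rho_{is}\bigr)$ over the simplex, where $a_s=\sum_{j\ne i}\rho_{js}$; this is concave on a convex set, so KKT is necessary and sufficient. Using $a_s+\rho_{is}=n\rho_s$, stationarity reduces to: there is $\lambda_i>0$ with $p_{is}/\rho_s=\lambda_i$ whenever $\rho_{is}>0$ and $p_{is}/\rho_s\le\lambda_i$ whenever $\rho_{is}=0$; and since $\rho_i\in\Delta(S)$ forces some coordinate positive, $\lambda_i=\max_s p_{is}/\rho_s$. These are exactly the optimality conditions for the linear program ``maximize $p_i\cdot x_i$ subject to $\rho\cdot x_i\le 1/n$, $x_i\ge 0$'', namely that $x_i$ spends all of the nominal wealth $1/n$ only on states maximizing $p_{is}/\rho_s$. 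Translating through the dictionary, $(\rho_1,\dots,\rho_n)$ is a Nash equilibrium if and only if each $x_i$ defined by $x_{is}=\rho_{is}/(n\rho_s)$ is parimutuel-optimal at price $\rho$. Finally I match market clearing with $\rho=f(\rho_1,\dots,\rho_n)$: from $\sum_i\rho_{is}=n\rho_s$ one gets $\sum_i x_{is}=1$, and conversely; and dividing $\rho_s x_{is}$ by $\sum_{s'}\rho_{s'}x_{is'}=1/n$ recovers the stated formula and shows $f(\rho_1,\dots,\rho_n)=\rho$.

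The routine part is the KKT bookkeeping. The one place to be careful is the corner behaviour: states $s$ with $a_s=0$ and $p_{is}>0$, where the $\log$ term forces $\rho_{is}>0$ so the constraint $\rho_{is}\ge 0$ is inactive and the analysis still goes through; and nailing down strict positivity of $\rho$ and the binding budget \emph{before} invoking the change of variables, since the formula in the statement implicitly presumes $\sum_{s'}\rho_{s'}x_{is'}\ne 0$. I do not anticipate any genuine obstacle beyond organizing these cases cleanly.
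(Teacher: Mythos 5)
Your proposal is correct and follows essentially the same route as the paper's proof: both sides are characterized by their first-order/KKT conditions, and the dictionary $\rho_{is}=n\rho_s x_{is}$ (equivalently $x_{is}=\rho_{is}/\sum_j\rho_{js}$) maps one set of conditions onto the other, with market clearing corresponding to $\rho=f(\rho_1,\ldots,\rho_n)$ and the binding budget to $\rho_i\in\Delta(S)$. The only difference is that you treat the boundary cases (null states, zero prices, the binding of the budget) explicitly, whereas the paper assumes full-support beliefs and appeals to a continuity argument for the boundary; your version is a bit more complete on that point but is not a different proof.
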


In the parimutuel betting model, $\rho_{s}x_{is}$ is the amount of money that individual $i$ bets on state $s$. By Proposition~\ref{prop:EG}, the Nash equilibrium report $\rho_i$ equals, in each state, the fraction that $i$ bets on the state over the total bets on the state: these coincide with the fraction of the total betting pool that goes to agent $i$ if the state is realized. For example, in the race-track betting story of \cite{eisenberggale59}, each state is a different horse that may win a race. If horse $s$ wins and the total betting pool (after the track's profits are deducted) is $M$, then $i$ receives a payout that equals $\frac{\rho_{s}x_{is}}{\sum_{s^{\prime}}\rho_{s^{\prime}}x_{is^{\prime}}} \times M$.

\begin{proof} 
The idea is to write equilibrium optimality conditions as first-order conditions for both parimutuel and Nash and observe they are essentially the same. We assume that every individual's belief is full-support, as the argument extends to boundaries in a continuous manner.

The first order conditions for the parimutuel problem in equilibrium are:
For all $i\in I$, there is $\lambda_i>0$ for which
\begin{enumerate}
\item If $x_{is}>0$, then $p_{is} = \lambda_i \rho_{s}$
\item If $x_{is} = 0$, then $p_{is}\leq \lambda_i \rho_{s}$.
\end{enumerate}
This, together with the hypothesis that for all $i\in I$, $\sum_{s} \rho_{s}x_{is}=\frac{1}{n}$ are what constitute a parimutuel equilibrium.

The first order conditions for the Nash revelation game in equilibrium are:
For all $i\in I$, there is $\lambda_i > 0$ for which
\begin{enumerate}
\item If $\rho_{is}>0$, then $p_{is} = \lambda_i \frac{1}{n}\sum_{j} \rho_{js}$
\item If $\rho_{is} = 0$, then $p_{is} \leq \lambda_i \frac{1}{n}\sum_{j} \rho_{js}$.
\end{enumerate}

Now, suppose that $(\rho,x)$ constitutes a parimutuel equilibrium.  Then we know that $\sum_s \rho_{s}x_{is} = \frac{1}{n}$ (this is the budget constraint and Walras' law).  So for each agent, let $\rho_{is} = n\rho_{s}x_{is}$ and observe this is a probability distribution for each agent.  Further:  $\frac{1}{n}\sum_{i} \rho_{is}=\frac{1}{n}\sum_{i} n\rho_{s}x_{is}=\rho_{s}\sum_i x_{is} = \rho_{s}$.  

Observe further that $x_i(s) > 0$ iff $\rho_i(s) > 0$, so that the first order conditions for the Nash revelation game are now satisfied (as $\rho(s)=\frac{1}{n}\sum_j \rho_j(s)$).  

In the other direction, suppose that $(\rho_1,\ldots,\rho_n)$ constitutes a Nash equilibrium of the preference revelation game.  Then let $\rho=\frac{1}{n}\sum_i \rho_i$ and let $x_{is} = \frac{\rho_{is}}{\sum_j \rho_{js}}$.  
Observe that $\sum_i x_{is} = 1$ so this is indeed an allocation of $(1,1,\ldots,1)$.  Further observe that $\sum_s \rho_{s}x_{is} =\frac{1}{n} \sum_{s} \sum_{j} \rho_{js} \frac{\rho_{is}}{\sum_{j} \rho_{js}}=\frac{1}{n}\sum_{s}  \rho_{is}=\frac{1}{n}$.

Finally, again $x_{is} > 0$ iff $\rho_{is} > 0$, so the FOC's for the parimutuel problem are satisfied.  \end{proof}

\subsection{Two states}

Here we characterize the parimutuel prices that arise for two states of the world.  The Nash equilibria will follow simply from the possible equilibrium allocations.  Let us call the states of the world $1$ and $2$, where $S = \{1,2\}$.  Any individual $i\in N$ is characterized by the probability they assign to $s_1$.  So, following the example, we represent $p\in \Delta(S)$ by the probability $p$ assigns to state $1$.

We show that probability aggregation resulting in equilibrium is a specific version of the generalized median rule by \cite{moulin1980strategy}, in which equally-spaced probability values play the role of phantom voters.  In the Proposition, the notation $\mbox{med}(x_1,\ldots,x_m)$ when $m$ is odd refers to the median value of the vector $(x_1,\ldots,x_m)$:  that is, the value $x^*$ for which $|\{i:x_i < x^*\}|\leq \frac{m-1}{2}$ and $|\{i:x_i > x^*\}|\leq \frac{m-1}{2}$.  
\begin{proposition}\label{prop:twostates}Given $p_1,\ldots,p_n\in [0,1]$, the parimutuel prices are given by $$\mbox{med}\left(p_1,\ldots,p_n,\frac{1}{n},\ldots,\frac{n-1}{n}\right).$$\end{proposition}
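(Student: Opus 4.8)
The plan is to use Proposition~\ref{prop:EG}: the aggregate price $\rho$ arising in Nash equilibrium is exactly the price vector of a parimutuel equilibrium with equal wealth $(\rho,\mathbf{x})$, so it suffices to show that the unique parimutuel price $\rho_1$ (the weight on state $1$) equals $\mbox{med}(p_1,\dots,p_n,\frac{1}{n},\dots,\frac{n-1}{n})$. First I would invoke the first-order conditions derived in the proof of Proposition~\ref{prop:EG}: for each $i$ there is $\lambda_i>0$ with $p_{i}=\lambda_i\rho_1$ when $x_{i1}>0$ and $p_i\le\lambda_i\rho_1$ when $x_{i1}=0$ (and symmetrically for state $2$ with $1-p_i$, $1-\rho_1$). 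The key qualitative observation is that with linear (risk-neutral) utility each consumer spends their entire nominal wealth $1/n$ on whichever state has the higher ``return per dollar,'' i.e.\ agent $i$ bets everything on state $1$ if $p_i/\rho_1 > (1-p_i)/(1-\rho_1)$, equivalently $p_i > \rho_1$; everything on state $2$ if $p_i<\rho_1$; and is indifferent (may split) if $p_i=\rho_1$.

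Next I would translate market clearing in state $1$ into an equation pinning down $\rho_1$. Let $A=\{i: p_i>\rho_1\}$, $B=\{i:p_i<\rho_1\}$, and $C=\{i:p_i=\rho_1\}$. Agents in $A$ demand $x_{i1}=\frac{1/n}{\rho_1}$ units of state-$1$ consumption, agents in $B$ demand $0$, and agents in $C$ demand something in $[0,\frac{1/n}{\rho_1}]$. Market clearing $\sum_i x_{i1}=1$ then reads $\frac{1}{n\rho_1}\big(|A| + \theta\big)=1$ for some $\theta\in[0,|C|]$, i.e.\ $\rho_1 = \frac{|A|+\theta}{n}$. Since $|A|$ is the number of agents with belief strictly above $\rho_1$ and $|A|+|C|$ is the number with belief $\ge\rho_1$, this says $\rho_1\in[\frac{|A|}{n},\frac{|A|+|C|}{n}]$, i.e.\ $\rho_1$ is a value $k/n$ (when $C=\emptyset$, forced to a grid point) or lies in the appropriate interval (when $C\ne\emptyset$, forced to coincide with the common belief $p_i=\rho_1$ of the tied agents), such that the number of $p_i$ strictly exceeding it is at most the number of grid points $\frac{k+1}{n},\dots,\frac{n-1}{n}$ above it and symmetrically below. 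I would then check directly that this is precisely the defining property of $\mbox{med}(p_1,\dots,p_n,\frac1n,\dots,\frac{n-1}{n})$: the combined multiset has $n + (n-1) = 2n-1$ entries, an odd number, so its median $x^*$ is the unique value with at most $n-1$ entries strictly below and at most $n-1$ strictly above, and one verifies the count of phantoms $\frac{1}{n},\dots,\frac{n-1}{n}$ lying strictly on either side of a candidate $\rho_1$ exactly supplies the slack needed to match the clearing condition above.

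The remaining task is existence and uniqueness of such a $\rho_1$, together with the edge cases. Uniqueness of the parimutuel price follows from Proposition~\ref{prop:EG} combined with uniqueness of the Walrasian/parimutuel equilibrium price in the Eisenberg--Gale setting (the equilibrium maximizes a strictly concave Nash product of the $p_i\cdot x_i$, so the supporting price is unique); alternatively one can argue directly that the clearing map $\rho_1\mapsto\sum_i x_{i1}(\rho_1)$ is (weakly) decreasing and crosses $1$ at exactly the median value. The boundary cases — $\rho_1\in\{0,1\}$, or some $p_i\in\{0,1\}$ — need a quick separate check, matching the ``$0\times\log 0 = 0$'' convention and the claim in Proposition~\ref{prop:EG} that the argument extends continuously to the boundary; for instance if more than $(n-1)/2$ of the $p_i$ equal $0$ the median is $0$ and indeed no equilibrium price can be positive. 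I expect the main obstacle to be handling the tie set $C$ cleanly: one must argue that when several $p_i$ coincide with the candidate price, the free split parameter $\theta$ can always be chosen in $[0,|C|]$ to clear the market, and that this is consistent with — indeed equivalent to — $\rho_1$ being forced to equal that common value; getting the inequalities in the median definition to line up with ``at most'' versus ``strictly'' on both sides is the fiddly part, but it is bookkeeping rather than a genuine difficulty.
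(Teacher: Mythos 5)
Your proposal is correct, but it takes a genuinely different route from the paper. The paper proceeds by construction and verification: it takes the claimed median value as given, splits into cases according to whether the median is a phantom $\frac{m}{n}$ or one of the $p_i$ (with sub-cases for $p_i<\frac{1}{n}$, $p_i>\frac{n-1}{n}$, and $\frac{m}{n}<p_i<\frac{m+1}{n}$), and in each case exhibits an explicit allocation $\mathbf{x}$ satisfying optimality and market clearing at that price; uniqueness of the parimutuel price is left to the cited Eisenberg--Gale results rather than argued in the proof. You instead derive the full characterization of equilibrium prices from primitives: risk-neutral demand is all-or-nothing ($p_i>\rho_1$ buys only state $1$, $p_i<\rho_1$ only state $2$), market clearing forces $\rho_1=\frac{|A|+\theta}{n}$ with $\theta\in[0,|C|]$, and the resulting condition $\frac{|A(\rho_1)|}{n}\le\rho_1\le\frac{|A(\rho_1)|+|C(\rho_1)|}{n}$ is then matched, by counting phantoms on each side, to the defining inequalities of the median of the $(2n-1)$-element multiset. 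The bookkeeping you defer does go through (e.g.\ at $\rho_1=\frac{m}{n}$ the median condition reads exactly $|A|\le m\le |A|+|C|$, which is your clearing condition), and your characterization has the added benefit of delivering uniqueness directly: if $\rho<\rho'$ both cleared, then $n\rho'\le|\{i:p_i\ge\rho'\}|\le|A(\rho)|\le n\rho$, a contradiction --- so the appeal to Eisenberg--Gale is not even needed. The trade-off is that the paper's proof displays the equilibrium allocations explicitly (which is useful for reading off the Nash equilibrium reports via Proposition~\ref{prop:EG}), whereas yours pins down the price more systematically but leaves the allocations implicit; the boundary cases ($\rho_1\in\{0,1\}$, degenerate beliefs) that you flag do require the separate check you describe, since the bang-per-buck comparison presupposes $\rho_1\in(0,1)$.
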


\begin{proof} Suppose the median is given by $\frac{m}{n}$.  This means that we can partition $N$ into two groups $U$ and $L$, where if $j\in U$, then $p_j \geq \frac{m}{n}$ and if $j\in L$, then $p_j \leq \frac{m}{n}$ where $|U|=m$ and $|L|=n-m$.  Consequently by setting the price $\rho=\frac{m}{n}$, for all $j\in U$, setting $x_{j1} = \frac{1}{m}$ and $x_{j2}=0$ is clearly optimal subject to the constraint 
\[\frac{m}{n}x_{j1}+\frac{n-m}{n}x_{j2}\leq \frac{1}{n}.\] Similarly, for all $j \in L$, setting $x_{j1}=0$ and $x_{j2}=\frac{1}{n-m}$ is optimal.  Observe that these allocations satisfy the market clearing conditions so that $\frac{m}{n}$ is then the equilibrium price (there may be potentially many equilibrium allocations depending on how many agents have $p_j = \frac{m}{n}$).

Now suppose the median is given by $p_i$ for some $i\in N$.  Then there are several possibilities:  either $p_i < \frac{1}{n}$, $p_i > \frac{n-1}{n}$, or $\frac{m}{n}<p_i<\frac{m+1}{n}$ for some $m$  (if $p_i = \frac{m}{n}$ for some $m$, we are back in the preceding case).

In the first case, this means that for all $j$, $p_j \leq p_i$.  Now set $\rho = p_i$, and observe that for all $j\neq i$, $x_{j1} = 0$ and $x_{j2} = \frac{1}{n(1-p_i)}$ is an optimal demand subject to the budget constraint; further, observe that $\frac{n-1}{n(1-p_i)}<1$ as $1-p_i > \frac{n-1}{n}$, so that the consumption of agents $j\neq i$ does not exhaust the market clearing constraint.  Therefore agent $i$ needs to consume $x_{i1}=1$ and $x_{i2}=1-\frac{n-1}{n(1-p_i)}$.  Observe that agent $i$ is indifferent between all allocations exhausting her budget constraint and that this allocation indeed exhausts the budget constraint.

The second case is symmetric.

Let us now handle the third case and fix $i\in N$ for which $p_i$ which is the median, where $\frac{m}{n}<p_i<\frac{m+1}{n}$.  Then we may partition $N\setminus\{i\}$ into two groups $U$ and $L$ where if $j\in U$, then $p_j \geq p_i$ and if $j\in L$, then $p_j \leq p_i$, where $|U|=m$ and $|L|=n-1-m$.  For each $j\in U$, setting $x_{j1}=\frac{1}{np_i}$ and $x_{j2}=0$ maximizes the preference subject to the budget constraint:  observe that $\frac{m}{np_i}<1$ as $p_i > \frac{m}{n}$, so that the consumption of agents in $U$ does not exhaust the market clearing constraint.  Similarly, for all $j\in L$, we set $x_{j1}=0$ and $x_{j2}=\frac{1}{n(1-p_i)}$ and observe that since $\frac{n-1-m}{n(1-p_i)}<1$ as $p_i<\frac{m+1}{n}$.  Finally, setting consumption of agent $i$ as $x_{i1}=1-\frac{m}{np_i}$ and $x_{i2}=1-\left(\frac{n-1-m}{n(1-p_i)}\right)$ will be optimal so long as this exhausts $i$'s budget constraint, which it does. 
\end{proof}

\begin{remark}Proposition~\ref{prop:twostates} easily demonstrates that the parimutuel price aggregation rule is Pareto-efficient for the case of two states, independently of risk preferences.  However, this fails with three states or more.  Consider for example two individuals with log-preferences, with priors $(1/2,1/2,0)$ and $(0,1/2,1/2)$.  The unique parimutuel price for these preferences is $(1/3,1/3,1/3)$.  However, each individual prefers $(1/4,1/2,1/4)$.  To see this, it is very simple to demonstrate algebraically that $$(1/2)\ln(1/4)+(1/2)\ln(1/2)>(1/2)\ln(1/3)+(1/2)\ln(1/3).$$  Thus, even though parimutuel prices arise from efficient allocation in a linear private goods economy, the efficiency property does not transfer to public good economies.\end{remark}

\section{Nash-implementable Belief Aggregation}\label{sec:implementation}
We turn to Nash implementation of belief aggregation rules, and provide a characterization of the rules that are implementable in Nash equilibrium. Here we move away from the assumption of logarithmic utility used in Section~\ref{sec:nasheq}, and consider instead a general von-Neumann Morgenstern utility index $u$, under the assumptions that we laid out in Section~\ref{sec:implementation}. When a planner has \df{aggregate beliefs} $q$, presumably as the outcome of some aggregation rule that takes individual beliefs as inputs, they choose a utility act $u(q)\in \U$; see Section~\ref{sec:model}.

An individual agent with belief $p_i$ has then a utility $q\mapsto \sum_{s\in S}p_{i,s}u_s(q)$ over aggregate beliefs. We denote the resulting preference over aggregate beliefs by $R(p_{i})$. That is:
\[q \mathrel{R(p_{i})} q^{\prime} \ \ \ \Longleftrightarrow \ \ \ \sum_{s\in S} p_{is} u_{s}(q) \ge 
\sum_{s\in S} p_{is}u_{s}(q^{\prime}).\]

A \df{game form} is a tuple $\Gamma=((M^{i})_{i \in I}, g)$, comprised of a \df{message space} $M^i$ for each individual $i\in I$, and an \df{outcome function} $g: \prod_{i \in I}M^{i}\rightarrow \Delta(S)$. Note that an outcome function chooses an aggregate belief for each message profile chosen by the individual agents.

Given a profile of beliefs $\textbf{p}\in \Delta(S)^I$, a game form $\Gamma$ defines a normal-form game $(M^i,V_{g,p_i})_{i\in I}$ where $M^i$ is $i$'s set of strategies, and $V_{g,p_i}\colon \prod_{j\in I}M^j\to \Re$ is a payoff function for player $i$ that satisfies 
\[
V_{g,p_i}(\textbf{m})\geq V_{g,p_i}(\textbf{m}') \Longleftrightarrow g(\textbf{m}) \mathrel R(p_i) g(\mathbf{m}').
\]

A message profile $\mathbf{m} \in \prod_{i \in I}M^{i}$ is \df{Nash equilibrium} in game $(M^i,V_{g,p_i})_{i\in I}$  if 
\[g(\mathbf{m}) \mathrel{R(p_{i})} g(\widetilde{m}^{i},\mathbf{m}^{-i})\]
holds for all $i \in I$ and $\widetilde{m}^{i} \in M^{i}$. 

Let $\NE(\Gamma,\mathbf{p})$ denote the set of Nash equilibria of $(M^i,V_{g,p_i})_{i\in I}$.

A belief  aggregation rule $f$ is \df{Nash-implementable} if there is a game form $\Gamma$ such that 
\[\{f(\mathbf{p})\}=g(\NE(\Gamma,\mathbf{p}))\]
for all $\mathbf{p} \in \Delta(S)^{I}$. In this case, we say that $\Gamma$ (Nash) implements $f$.

Given $p_{i} \in \Delta(S)$ and $q \in \Delta(S)$, let
\[L(q,p_{i})=\left\{q^{\prime} \in \Delta(S): q \mathrel{R(p_{i})} q^{\prime} \right\}\]

The following property is a natural counterpart to the monotonicity condition of \cite{maskin1999nash} for our model.
\begin{description}
\item \textbf{Monotonicity}: 
For all $\mathbf{p}, \widetilde{\mathbf{p}} \in \Delta(S)^{I}$,  if
\[L(f(\mathbf{p}),\widetilde{p}_{i})  \supset L(f(\mathbf{p}),p_{i}) \]
for all $i \in I$, then 
\[f(\widetilde{\mathbf{p}})=f(\mathbf{p}).\]
\end{description}

We may state the Monotonicity property in a different, but equivalent, way. Define the preference relation over utility acts that is induced by a belief $p_{i}$ as
\[u \mathrel{R^{\mathcal{U}}(p_{i})} u^{\prime} \ \ \ \Longleftrightarrow \ \ \ \sum_{s\in S} u_{s}p_{is} \ge 
\sum_{s\in S} u_{s}^{\prime}p_{is},\]
where $u,u^{\prime} \in \mathcal{U}$. 
Given $p_{i} \in \Delta(S)$ and $u \in \mathcal{U}$, let
\[L^{\mathcal{U}}(u,p_{i})=\left\{u^{\prime} \in \mathcal{U}: u \mathrel{R^{\mathcal{U}}(p_{i})}  
u^{\prime} \right\}\]
Then the monotonicity condition is equivalently stated as 
\begin{description}
\item \textbf{Monotonicity (alternative)}: 
For all $\mathbf{p}, \widetilde{\mathbf{p}} \in \Delta(S)^{I}$,  if
\[L^{\mathcal{U}}(u(f(\mathbf{p})),\widetilde{p}_{i})  \supset L^{\mathcal{U}}(u(f(\mathbf{p})),p_{i}) \]
for all $i \in I$, then 
\[f(\widetilde{\mathbf{p}})=f(\mathbf{p}).\]
\end{description}

\begin{proposition}
If a belief aggregation rule is Nash-implementable, then it is monotonic.
\end{proposition}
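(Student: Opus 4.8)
The plan is to run the classical Maskin argument showing that Nash implementation forces monotonicity. Suppose $\Gamma=((M^{i})_{i\in I},g)$ implements $f$, and fix any pair $\mathbf{p},\widetilde{\mathbf{p}}\in\Delta(S)^{I}$ satisfying the hypothesis of Monotonicity, namely $L(f(\mathbf{p}),\widetilde{p}_{i})\supseteq L(f(\mathbf{p}),p_{i})$ for every $i\in I$. Since $\{f(\mathbf{p})\}=g(\NE(\Gamma,\mathbf{p}))$, there is a message profile $\mathbf{m}\in\NE(\Gamma,\mathbf{p})$ with $g(\mathbf{m})=f(\mathbf{p})$.

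The key step is to show that the \emph{same} profile $\mathbf{m}$ remains a Nash equilibrium when the belief profile changes to $\widetilde{\mathbf{p}}$. Take any $i\in I$ and any deviation $\widetilde{m}^{i}\in M^{i}$. Because $\mathbf{m}\in\NE(\Gamma,\mathbf{p})$, we have $g(\mathbf{m})\mathrel{R(p_{i})}g(\widetilde{m}^{i},\mathbf{m}^{-i})$, i.e.\ $g(\widetilde{m}^{i},\mathbf{m}^{-i})\in L(g(\mathbf{m}),p_{i})=L(f(\mathbf{p}),p_{i})$. By the hypothesis this set is contained in $L(f(\mathbf{p}),\widetilde{p}_{i})=L(g(\mathbf{m}),\widetilde{p}_{i})$, so $g(\mathbf{m})\mathrel{R(\widetilde{p}_{i})}g(\widetilde{m}^{i},\mathbf{m}^{-i})$. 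Since $i$ and the deviation were arbitrary, $\mathbf{m}\in\NE(\Gamma,\widetilde{\mathbf{p}})$.

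Finally, invoking implementation again at $\widetilde{\mathbf{p}}$ gives $g(\mathbf{m})\in g(\NE(\Gamma,\widetilde{\mathbf{p}}))=\{f(\widetilde{\mathbf{p}})\}$, so $f(\widetilde{\mathbf{p}})=g(\mathbf{m})=f(\mathbf{p})$, which is exactly the conclusion of Monotonicity. The only subtlety worth flagging is that the argument uses only the ordinal content of the payoff functions $V_{g,p_i}$ — the equivalence $V_{g,p_i}(\mathbf{m})\geq V_{g,p_i}(\mathbf{m}')\Leftrightarrow g(\mathbf{m})\mathrel{R(p_i)}g(\mathbf{m}')$ — so one should phrase the Nash-equilibrium condition directly in terms of the induced preferences $R(p_i)$ on $\Delta(S)$, as in the definition given above, to keep the lower-contour-set manipulation clean. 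There is no real obstacle here: the proof is a direct, essentially one-line, application of the standard necessity direction of Maskin's theorem, and the equivalent "alternative" formulation in terms of $L^{\mathcal U}$ and $R^{\mathcal U}(p_i)$ follows verbatim by the same steps since $u(\cdot)$ identifies $\Delta(S)$ with $\mathcal U$.
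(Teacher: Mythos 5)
Your proof is correct and follows essentially the same route as the paper's: take a Nash equilibrium $\mathbf{m}$ at $\mathbf{p}$ with $g(\mathbf{m})=f(\mathbf{p})$, use the lower-contour-set inclusion to show $\mathbf{m}$ remains a Nash equilibrium at $\widetilde{\mathbf{p}}$, and then invoke the implementation condition again to conclude $f(\widetilde{\mathbf{p}})=f(\mathbf{p})$. This is the standard necessity direction of Maskin's theorem, exactly as the paper presents it.
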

\begin{proof}
Suppose that $\Gamma$ is a game-form that implements the belief aggregation rule $f$.

Pick any pair  $\mathbf{p}, \widetilde{\mathbf{p}} \in \Delta(S)^{I}$  which satisfies the presumption of Monotonicity.

Pick $\mathbf{m} \in \NE(\Gamma,\mathbf{p})$. By the definition of Nash implementation, we have that  $g(\mathbf{m})=f(\mathbf{p})$ and that 
\[g(\mathbf{m})\mathrel{R(p_{i})} g(\widetilde{m}^{i},\mathbf{m}^{-i})\]
for all $i \in I$ and $\widetilde{m}^{i} \in M^{i}$. 

The it follows from the choice of $\mathbf{p}$ and $\widetilde{\mathbf{p}}$ that 
\[g(\mathbf{m}) \mathrel{R(\widetilde{p}_{i})} g(\widetilde{m}^{i},\mathbf{m}^{-i}),\]
for all $i \in I$ and $\widetilde{m}^{i} \in M^{i}$. 
Hence $\mathbf{m} \in \NE(\Gamma,\widetilde{\mathbf{p}})$.

By the definition of Nash implementation, we have 
$g(\mathbf{m})=f(\widetilde{\mathbf{p}})$, and hence $f(\widetilde{\mathbf{p}})=f(\mathbf{p})$.
\end{proof}

The next property of a belief aggregation rule concerns invariance when a profile of individual beliefs is move towards the aggregate belief chosen by the rule. 
\begin{description}
\item \textbf{Recursive Invariance}: For all $\mathbf{p} \in \Delta(S)^{I}$ and $(\lambda_{i})_{i \in I} \in [0,1]^{I}$ it holds 
\[
f(((1-\lambda_{i})p_{i}+\lambda_{i}f(\mathbf{p}))_{i \in I})=
f(\mathbf{p}).
\]
\end{description}

We prove that recursive invariance is, under our assumptions, equivalent to monotonicity. This result is remarkable because recursive invariance is independent of the risk attitudes reflected in the index $u$.\footnote{Recursive invariance appears in \citet{brady2015spatial,brady2017spatial} in a similar context, where it is shown to be equivalent to Monotonicity in the context of Euclidean preferences.}

\begin{lemma}\label{lem:maskin}
Monotonicity and Recursive Invariance are equivalent.
\end{lemma}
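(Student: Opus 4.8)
The plan is to prove the two implications separately, exploiting the structure of the lower contour sets $L(q,p_i)$ for the SEU-type preferences $R(p_i)$. The crucial geometric observation is that, for a fixed target belief $q = f(\mathbf{p})$, the relation ``$L(q,\widetilde p_i) \supset L(q,p_i)$'' has a clean characterization: since $R(p_i)$ is represented by the linear functional $\sum_s p_{is} u_s(\cdot)$ on the set $\U$ of attainable utility acts (using the alternative formulation of Monotonicity, which is cleaner because $R^\U(p_i)$ is genuinely linear in $u$), the lower contour set $L^\U(u(q),p_i)$ is the intersection of $\U$ with a half-space whose supporting hyperplane at $u(q)$ has normal $p_i$. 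By Proposition~1 (the one-to-one correspondence), $u(q)$ is the unique maximizer of $\sum_s p_{is} u_s$ over $\U$ precisely when $p_i$ points in the direction normal to $\U$ at $u(q)$ --- and the first-order/optimality characterization of $x(q)$ shows that this happens iff $q$ itself lies on a specific ray. I would first record the key claim: \emph{$L^\U(u(q),\widetilde p_i) \supset L^\U(u(q),p_i)$ holds if and only if $\widetilde p_i = (1-\lambda_i)p_i + \lambda_i q$ for some $\lambda_i \in [0,1]$}, i.e. $\widetilde p_i$ lies on the segment from $p_i$ toward $q$.

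To prove that key claim, note that $L^\U(u(q),p_i) \subseteq L^\U(u(q),\widetilde p_i)$ says that $u(q)$ being weakly $p_i$-preferred to $u'$ implies it is weakly $\widetilde p_i$-preferred to $u'$, for all $u' \in \U$. Since $u(q) = u(f(\mathbf p))$ need not be the $p_i$-optimum in $\U$, the set $L^\U(u(q),p_i)$ is a ``lens''-shaped region (the part of $\U$ on one side of the hyperplane through $u(q)$ with normal $p_i$). Containment of lenses forces the two hyperplanes through the common point $u(q)$ to be nested in the appropriate direction, which --- given that $\U$ has nonempty relative interior and $u(q)$ is a boundary point --- forces $\widetilde p_i$ to be a convex combination of $p_i$ and the outward normal cone direction of $\U$ at $u(q)$. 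The outward normal to $\U$ at $u(q)$ is, up to positive scaling and the change of coordinates between $x$-space and $u$-space, exactly $q$ (this is precisely the first-order condition $u'(x_s(q)) p_s$-proportionality that defines $x(q)$ as the optimal portfolio for belief $q$). Normalizing everything to lie in $\Delta(S)$ then gives $\widetilde p_i = (1-\lambda_i)p_i + \lambda_i q$. One has to be slightly careful about boundary cases where $q$ or $p_i$ is not full support, but the authors' own convention (extending continuously to the boundary) and the explicit limiting behavior of $u'$ handle this.

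Granting the key claim, both directions are short. \emph{Recursive Invariance $\Rightarrow$ Monotonicity}: suppose $L(f(\mathbf p),\widetilde p_i) \supset L(f(\mathbf p),p_i)$ for all $i$; by the claim, $\widetilde p_i = (1-\lambda_i)p_i + \lambda_i f(\mathbf p)$ for some $\lambda_i \in [0,1]$; then Recursive Invariance applied with these $\lambda_i$ gives $f(\widetilde{\mathbf p}) = f(\mathbf p)$ directly. \emph{Monotonicity $\Rightarrow$ Recursive Invariance}: fix $\mathbf p$ and $(\lambda_i)_{i\in I} \in [0,1]^I$, and set $\widetilde p_i = (1-\lambda_i)p_i + \lambda_i f(\mathbf p)$. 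By the claim (used in the other direction: any point on the segment yields the contour-set containment), $L(f(\mathbf p),\widetilde p_i) \supset L(f(\mathbf p),p_i)$ for every $i$, so Monotonicity yields $f(\widetilde{\mathbf p}) = f(\mathbf p)$, which is exactly the Recursive Invariance identity.

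\textbf{Main obstacle.} The real work is entirely in the key claim --- specifically the ``only if'' direction, i.e. showing that \emph{the only} way to enlarge the lower contour set $L(f(\mathbf p),\cdot)$ while keeping $f(\mathbf p)$ fixed is to slide the belief along the segment toward $f(\mathbf p)$. This requires understanding the geometry of the hypersurface $\U$ (the image of the simplex $X$ under the coordinatewise map $u$) and identifying its normal cone at the point $u(f(\mathbf p))$ with the ray through $f(\mathbf p)$; the strict concavity of $u$ together with the Inada conditions is what makes this identification exact and makes $\U$ strictly convex with a well-defined normal direction at every point, so that contour-set containment pins down the belief direction uniquely rather than just up to a cone. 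Handling the non-interior beliefs carefully is the secondary nuisance.
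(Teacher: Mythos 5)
Your proposal is correct and follows essentially the same route as the paper: the paper's proof also reduces everything to the key claim that $L^{\mathcal{U}}(u(f(\mathbf{p})),p_i)\subseteq L^{\mathcal{U}}(u(f(\mathbf{p})),\widetilde p_i)$ holds iff $\widetilde p_i$ lies on the segment from $p_i$ to $f(\mathbf{p})$, proved by identifying the lower contour set with the intersection of $\mathcal{U}$ and a half-space, using smoothness of $\mathcal{U}$ and the fact that $f(\mathbf{p})$ is its (unique) supporting normal at $u(f(\mathbf{p}))$, and then a separating-hyperplane/cone-duality step. The only presentational difference is that the paper splits your ``lens containment forces the belief onto the segment'' step into two explicit claims (contour containment $\Leftrightarrow$ half-space containment, and half-space containment $\Rightarrow$ membership in the cone generated by $p_i$ and $f(\mathbf{p})$).
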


\begin{proof}
We write $H^{-}(\pi,\alpha)$ for the half-space $\{v \in \mathbb{R}^{S}:\pi\cdot v\leq \alpha \}$.    

The proof proceeds by establishing two claims.

Claim 1:  $L^{\mathcal{U}}(p_i,u(f(\mathbf{p})))\subseteq L^{\mathcal{U}}(q_i,u(f(\mathbf{p})))$ iff $$H^{-}(p_i,p_i\cdot u(f(\mathbf{p}))\cap  H^{-}(f(p),f(p)\cdot u(f(\mathbf{p}))\subseteq H^{-}(q_i,q_i\cdot u(f(\mathbf{p})).$$

Claim 2: If $H^{-}(p_{i},p_{i}\cdot u(f(\mathbf{p}))\cap  H^{-}(f(\mathbf{p}),f(\mathbf{p})\cdot u(f(\mathbf{p})))\subseteq H^{-}(q_{i},q_{i}\cdot u(f(\mathbf{p})))$ then there is $\lambda_{i}\in [0,1]$ so that $q_{i}=\lambda p_{i}+(1-\lambda_{i})f(\mathbf{p})$.

First we prove Claim 1. We may without loss of generality take $u(f(\mathbf{p}))=0$ (translate these sets by $u(f(\mathbf{p}))$).

Suppose that $L^{\mathcal{U}}(p_{i},0)\subseteq L^{\mathcal{U}}(q_{i},0)$ and let $v\in H^{-}(p_{i},0)\cap  H^{-}(f(\mathbf{p}),0)$. Since $\mathcal{U}$ is smooth, and $f(\mathbf{p})$ supports $\mathcal{U}$ at $0$, $f(\mathbf{p})\cdot v\leq 0$ implies that there is $\lambda>0$ so that $\lambda v\in \mathcal{U}$.\footnote{If that were not the case, the only point in the intersection of $\mathcal{U}$ and the line $\{\lambda v:\lambda\in\Re \}$ would be $0$. Then, since $\mathcal{U}$ is convex, we may find a hyperplane separating this line from $\mathcal{U}$  and there would then exist two distinct supporting hyperplanes to $\mathcal{U}$ at zero. Smoothness of $\mathcal{U}$ implies that there is only one.} Then $p_{i}\cdot \lambda v\leq 0$ means that $\lambda v\in L^{\mathcal{U}}(p_{i},0)$. So $\lambda v\in L^{\mathcal{U}}(q_{i},0)$ and thus $q_{i}\cdot v\leq q_{i}\cdot u(f(\mathbf{p}))=0$.

Conversely, suppose that $H^{-}(p_{i},0)\cap  H^{-}(f(\mathbf{p}),0)\subseteq H^{-}(q_{i},0)$ and let $v\in L^{\mathcal{U}}(p_{i},0)$. Then $v\in \mathcal{U}$, which implies that $f(\mathbf{p})\cdot v\leq 0$, as $f(\mathbf{p})$ supports $\mathcal{U}$ at $0$, and $p_{i}\cdot v\leq  0$. Hence $v\in H^{-}(q_{i},0)$, so $v\in L^{\mathcal{U}}(q_{i},0)$.

Now we turn to Claim 2. Again take $u(f(\mathbf{p}))=0$. Suppose that the claim is false. Then by a version of separating hyperplane theorem there exists $v$ so that $q_{i}\cdot v> 0 \geq \pi\cdot v$ for all $\pi$ in the cone generated by $p_{i}$ and $f(\mathbf{p})$. In particular, $v\in H^{-}(p_{i},p_{i}\cdot u(f(\mathbf{p}))\cap  H^{-}(f(\mathbf{p})),f(\mathbf{p})\cdot u(f(\mathbf{p})))$, which is a contradiction.
\end{proof}

\begin{description}
    \item \textbf{No Veto Power}: For all $\mathbf{p} \in \Delta(S)^{I}$, if there are $I^{\prime}\subset I$ with $|I^{\prime}|\ge |I|-1$ and $p \in \Delta(S)$ such that $p_{i}=p$ for all $i \in I^{\prime}$, then $f(\mathbf{p})=p$.
\end{description}

\begin{proposition}\label{prop:Nash}
Assume $|I|\ge 3$. Then a belief aggregation rule is Nash-implementable 
if it satisfies No Veto Power and Monotonicity. 
\end{proposition}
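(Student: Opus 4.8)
The plan is to build the canonical Maskin mechanism and check that Monotonicity and No Veto Power make it work in the present setting. Let each agent $i$'s message space be $M^{i}=\Delta(S)^{I}\times\Delta(S)\times\mathbb{Z}_{+}$, so a message is a triple $m^{i}=(\mathbf{p}^{i},a^{i},z^{i})$: a claimed profile of beliefs, a proposed aggregate belief, and a nonnegative integer. Define the outcome function $g$ by three rules. Rule 1: if all agents send the same triple $(\mathbf{p},a,z)$ with $a=f(\mathbf{p})$, then $g(\mathbf{m})=f(\mathbf{p})$. Rule 2: if all agents but one, say $j$, send the same triple $(\mathbf{p},a,z)$ with $a=f(\mathbf{p})$ and $m^{j}$ differs, then $g(\mathbf{m})=a^{j}$ if $a^{j}\in L(f(\mathbf{p}),p_{j})$ and $g(\mathbf{m})=f(\mathbf{p})$ otherwise. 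Rule 3: in every other case, $g(\mathbf{m})=a^{i^{\ast}}$ where $i^{\ast}$ is the agent submitting the largest integer, with ties broken by smallest index.

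Before the two implementation checks I would record one model-specific fact: for every $p_{i}\in\Delta(S)$, the preference $R(p_{i})$ over $\Delta(S)$ has $p_{i}$ as its unique maximal element. Indeed $\sum_{s}p_{i,s}u_{s}(q)=\sum_{s}p_{i,s}u(x_{s}(q))\le\max_{x\in X}\sum_{s}p_{i,s}u(x_{s})=\sum_{s}p_{i,s}u_{s}(p_{i})$ for all $q$, so $p_{i}$ is a maximizer; and since $\arg\max_{x\in X}\sum_{s}p_{i,s}u(x_{s})$ is a singleton (strict concavity on the support of $p_{i}$, with the remaining coordinates forced to $0$), the one-to-one correspondence between beliefs and optimal portfolios forces the maximizer of $R(p_{i})$ to be unique. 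Hence ``$q$ is $R(p_{i})$-maximal'' is the same as ``$q=p_{i}$'', so No Veto Power as stated is exactly the assertion that any aggregate belief which is simultaneously best for at least $|I|-1$ agents must be selected by $f$.

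Now the first check: $f(\mathbf{p})\in g(\NE(\Gamma,\mathbf{p}))$. At the true profile $\mathbf{p}$ have each agent send $(\mathbf{p},f(\mathbf{p}),0)$; Rule 1 yields $f(\mathbf{p})$, and a unilateral deviation by $i$ triggers Rule 2 with $i$ as the lone deviator, producing an outcome in $L(f(\mathbf{p}),p_{i})$, so the profile is a Nash equilibrium. The second, harder, check is $g(\NE(\Gamma,\mathbf{p}))\subseteq\{f(\mathbf{p})\}$. Take $\mathbf{m}\in\NE(\Gamma,\mathbf{p})$ and case on the rule that fixes $g(\mathbf{m})$. If Rule 1 applies with common triple $(\mathbf{p}',f(\mathbf{p}'),z)$, then the outcomes $i$ can reach by deviating are exactly $L(f(\mathbf{p}'),p'_{i})$, so the equilibrium condition gives $L(f(\mathbf{p}'),p'_{i})\subseteq L(f(\mathbf{p}'),p_{i})$ for all $i$, and Monotonicity yields $f(\mathbf{p})=f(\mathbf{p}')=g(\mathbf{m})$. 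If Rule 2 applies with $j$ the lone deviator, then, using $|I|\ge 3$, any agent $i\neq j$ can deviate to a message carrying a strictly largest integer, which lands in Rule 3 and lets $i$ obtain any element of $\Delta(S)$; equilibrium then forces $g(\mathbf{m})$ to be $R(p_{i})$-maximal, hence $p_{i}=g(\mathbf{m})$, for every $i$ in the set $I\setminus\{j\}$ of size $|I|-1$, and No Veto Power gives $f(\mathbf{p})=g(\mathbf{m})$. The Rule 3 case runs the same argument for every agent: each agent can seize the integer game, so $g(\mathbf{m})$ must be everyone's best element, $p_{i}=g(\mathbf{m})$ for all $i$, and No Veto Power again closes it.

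I expect the main obstacle to be the bookkeeping around the integer game: one must verify that from a Rule 2 or Rule 3 configuration a suitable unilateral deviation truly stays in (or moves to) Rule 3 with the deviator as unique winner — in particular that it cannot be re-absorbed into Rule 1 or Rule 2 in a way that limits the deviator's reachable outcomes — and this is exactly where $|I|\ge 3$ is used. The secondary point needing care is the uniqueness-of-best-element fact above, which is what allows the paper's unanimity-flavored No Veto Power to substitute for the classical no-veto-power condition; the remainder is the standard Maskin construction.
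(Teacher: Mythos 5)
Your proposal is correct and follows essentially the same route as the paper: the canonical Maskin mechanism with an integer game, the same three-way case analysis of equilibrium configurations, Monotonicity applied to the consensus case and No Veto Power to the others. The minor variations — having agents announce an aggregate belief $a^{i}$ directly rather than relying on surjectivity of $f$ (which in the paper's version follows from No Veto Power applied to constant profiles), disposing of all lone-deviator equilibria via No Veto Power where the paper splits off a sub-case handled by Monotonicity, and making explicit the fact that $p_{i}$ is the unique $R(p_{i})$-maximal element (which the paper uses implicitly to link its unanimity-style No Veto Power to the classical condition) — are sound but do not change the substance of the argument.
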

\begin{proof}
Consider the game form $\Gamma$ defined by
    \[M^{i}=\Delta(S)^{I} \times \mathbb{N}\]
    for each $i \in I$. A generic element of $M^i$ is denoted by 
    $m^{i}=(\mathbf{p}^{i},k^{i})$. The outcome function $g$ is defined as follows:

\textbf{Rule 1}: If there is $\widetilde{\mathbf{p}}
\in \Delta(S)^{I} $ such that 
$\mathbf{p}^{j}=\widetilde{\mathbf{p}}$ holds for all 
$j$ then 
$g(\mathbf{m})= f(\mathbf{p})$.

\textbf{Rule 2}: If there is $\widetilde{\mathbf{p}}
\in \Delta(S)^{I} $ such that 
$\mathbf{p}^{j}=\widetilde{\mathbf{p}}$ holds for all 
$j$ but $i$, then 
$g(\mathbf{m})= f(\mathbf{p}^{i})$ if 
$ f(\widetilde{\mathbf{p}}) \mathrel{R(\widetilde{p}_{i})} f(\mathbf{p}^{i})$ and 
$g(\mathbf{m})=f(\widetilde{\mathbf{p}})$ otherwise.

\textbf{Rule 3}: Otherwise, play the integer game or the modulo game, and 
let $g(\mathbf{m})=f(\mathbf{p}^{i})$ for $i \in I$ being the winner of the integer/modulo game.

\bigskip

To show that $\{f(\mathbf{p})\} \subset g(\NE(\Gamma,\mathbf{p}))$, 
consider the message profile $\mathbf{m}$ such that 
$m^{i}=(\mathbf{p},k^{i})$ and, say $k^i=1$, for all $i \in I$. Then $g(\mathbf{m})=f(\mathbf{p})$ holds by construction and there is no profitable deviation, hence 
$\mathbf{m} \in \NE(\Gamma,\mathbf{p})$.

\bigskip
To show that $\{f(\mathbf{p})\} \supset g(\NE(\Gamma,\mathbf{p}))$, 
pick any $\mathbf{m} \in \NE(\Gamma,\mathbf{p})$.

Case 1: Suppose there is $\widetilde{\mathbf{p}}\in \Delta(S)^{I}$ such that $m^i_1=\widetilde{\mathbf{p}}$ for all $i\in I$. To use the monotonicity axiom, consider any $q\in\Delta(S)$ with $f(\widetilde{\mathbf{p}})\mathrel{R(\widetilde{p}_{i})} q$ for all $i\in I$. Since $f$ is surjective, there exists some $\mathbf{p}'$ with $q=f(\mathbf{p}')$. Since  $f(\widetilde{\mathbf{p}})\mathrel{R(\widetilde{p}_{i})} q$, any $i$ could make the rule choose $q$. Given that $\mathbf{m}$ is a Nash equilibrium, it must be that $f(\widetilde{\mathbf{p}})\mathrel{R(p_{i})} q$. By Monotonicity, then, we conclude that $f(\mathbf{p})=f(\widetilde{\mathbf{p}})$.

Case 2: Suppose there is $\widetilde{\mathbf{p}}\in \Delta(S)^{I}$ such that $m^i_1=\wtd{\mathbf{p}}$ for all $i\in I\setminus\{ j\}$ and $m^j_1\neq \widetilde{\mathbf{p}}$. Suppose that $g(\mbf m)=f(\wtd{\mbf p})$. Again we use the monotonicity axiom. Consider $q\in\Delta(S)$ with $f(\widetilde{\mathbf{p}})\mathrel{R(\widetilde{p}_{i})} q$ for all $i\in I$. Since $f$ is surjective, there is $\mathbf{p}'$ with $q=f(\mathbf{p}')$. Since  $f(\widetilde{\mathbf{p}})\mathrel{R(\widetilde{p}_{j})} q$, $j$ could make the rule choose $q$ and thus we must have $f(\widetilde{\mathbf{p}})\mathrel{R(p_{j})} q$. 

Consider $i\neq j$. Note that $g(\mbf m)=f(\wtd{\mbf p})$ implies that $f(m^j_1)\mathrel{R(\wtd{p}^j)} f(\wtd{\mbf p})$; and then $f(\wtd{\mbf p}) \mathrel{R(\wtd{p^j})} q$ implies that $q=f(\mbf p')\neq f(m^j_1)$, and therefore $\mbf p'\neq m^j_1$. This means that $i$ could make the rule choose $q$ by reporting $\mbf p'\neq m^j_1,\wtd{\mbf p}$ and winning the integer game. Again, then,  $f(\widetilde{\mathbf{p}})\mathrel{R(p_{i})} q$. By Monotonicity, we conclude that $f(\mathbf{p})=f(\widetilde{\mathbf{p}})$.

Case 3: Suppose there is $\widetilde{\mathbf{p}}\in \Delta(S)^{I}$ such that $m^i_1=\wtd{\mathbf{p}}$ for all $i\in I\setminus\{ j\}$ and $m^j_1\neq \widetilde{\mathbf{p}}$. Suppose that $g(\mbf m)=f(m^j_1)$. 

Consider $q\in\Delta(S)$. Since $f$ is surjective, there is $\mathbf{p}'$ with $q=f(\mathbf{p}')$.

Consider any $i\neq j$. Then, $i$ could make the rule choose $q$ by reporting $\mbf p'\neq m^j_1,\wtd{\mbf p}$ and winning the integer game. By the equilibrium condition it holds $f(m^j_1)\mathrel{R(p_{i})} q$. By continuity of preference, this implies $f(m^j_1)\mathrel{R(p_{i})} q$ for all $q \in \Delta(S)$. 
Thus, $f(m^j_1)$ is the maximal element for $R(p_{i})$ for all $i \ne j$. 

On the other hand, $j$ can deviate only by choosing 
$q \in L(f(\widetilde{\mathbf{p}}),\widetilde{p}_{i})$. Because of the equilibrium condition, 
$f(m^j_1)$ is the maximal element for $R(p_{j})$ in $L(f(\widetilde{\mathbf{p}}),\widetilde{p}_{i})$

Hence, by No Veto Power we conclude $f(\mathbf{p})=f(m^j_1)=g(\mathbf{m})$.

Case 4: Otherwise, everybody can deviate by announcing 
$\overline{m}^{i}=(\overline{\mathbf{p}}^{i},\overline{k}^{i})$ such that 
$f(\overline{\mathbf{p}}^{i})=p_{i}$ and $\overline{k}^{i}$ is the largest number. Since it is not profitable, it must be that 
$g(\mathbf{m})=p_{i}$. Since it is true for all $i \in I$ we reach a contradiction or unanimity.

\end{proof}

\begin{corollary}
Assume $|I|\ge 3$.  Then a belief aggregation rule 
$f$ is Nash-implementable 
if it satisfies No Veto Power and Recursive Invariance. 
\end{corollary}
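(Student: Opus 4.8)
The plan is to obtain this Corollary as an immediate consequence of the two results already established: Proposition~\ref{prop:Nash} and Lemma~\ref{lem:maskin}. First I would invoke Lemma~\ref{lem:maskin}, which asserts that, under the standing assumptions on $u$ and $X$, Monotonicity and Recursive Invariance are equivalent. Hence any belief aggregation rule $f$ that satisfies Recursive Invariance also satisfies Monotonicity. Since by hypothesis $f$ additionally satisfies No Veto Power and $|I|\ge 3$, all the hypotheses of Proposition~\ref{prop:Nash} are met, and that Proposition yields directly that $f$ is Nash-implementable.

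The only point that requires a small remark is that the proof of Proposition~\ref{prop:Nash} makes use of surjectivity of $f$ (to select, for a target $q\in\Delta(S)$, a profile $\mathbf{p}'$ with $f(\mathbf{p}')=q$). I would note that surjectivity is in fact implied by No Veto Power: for any $p\in\Delta(S)$, applying No Veto Power to the unanimous profile $\mathbf{p}$ with $p_i=p$ for all $i\in I$ (taking $I'=I$) gives $f(\mathbf{p})=p$. So no extra assumption is needed, and the chain ``Recursive Invariance $\Rightarrow$ Monotonicity, then No Veto Power $+$ Monotonicity $+$ $|I|\ge 3$ $\Rightarrow$ Nash-implementable'' closes the argument.

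There is essentially no obstacle here: the substantive work has already been done in Lemma~\ref{lem:maskin} (the hyperplane/smoothness argument translating the Maskin-type lower-contour-set condition into the ``move the belief toward the aggregate'' condition) and in Proposition~\ref{prop:Nash} (the canonical Maskin-style game form with the integer game). The Corollary is just the restatement of Proposition~\ref{prop:Nash} with Monotonicity replaced by the equivalent, risk-attitude-independent condition of Recursive Invariance, which is the conceptually appealing payoff of Lemma~\ref{lem:maskin}.
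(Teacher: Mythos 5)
Your proof is correct and matches the paper's (implicit) argument exactly: the corollary is obtained by combining Lemma~\ref{lem:maskin} (Recursive Invariance $\Leftrightarrow$ Monotonicity) with Proposition~\ref{prop:Nash}. Your added observation that No Veto Power supplies the surjectivity used inside the proof of Proposition~\ref{prop:Nash} is a sound and worthwhile clarification, but it does not change the route.
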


\bigskip

Here are examples of belief aggregation rules which satisfy Recursive Invariance.
\begin{proposition}Two anonymous, single-valued solutions satisfying Recursive Invariance are
\begin{enumerate}
\item The geometric median for $|I|$ odd (that is, $f(\mathbf{p}) = \arg\min_{p}\sum_{i\in I}\|p_i-p\|$)
\item The equal-wealth parimutuel prices (that is, $f(\mathbf{p})$ are the equilibrium prices for the private goods linear economy where wealth is given by $1/n$)
\end{enumerate}
\end{proposition}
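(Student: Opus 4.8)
The plan is to verify Recursive Invariance for each rule by the same device. Write $q=f(\mathbf p)$ and, for mixing weights $\mu_i\in[0,1]$ (the $\lambda_i$ of the Recursive Invariance statement, renamed here to free up $\lambda$ for Lagrange multipliers), set $p_i^{\mu}:=(1-\mu_i)p_i+\mu_i q$. I would show that $q$ is still optimal for the perturbed profile $(p_i^{\mu})_{i\in I}$ --- a minimizer in the first case, an equilibrium price in the second --- so that single-valuedness of the rule forces $f((p_i^{\mu})_{i\in I})=q$, which is exactly Recursive Invariance. Anonymity of both rules is immediate from the symmetry of their defining expressions; single-valuedness holds because for $|I|$ odd the function $p\mapsto\sum_{i\in I}\|p_i-p\|$ is strictly convex unless the $p_i$ are collinear, in which case its minimizer is their one-dimensional median, unique since $|I|$ is odd, and the equal-wealth parimutuel price is unique by \cite{eisenberggale59}. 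So Recursive Invariance is all that needs an argument.

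For the geometric median, I would first note that $\arg\min_p\sum_{i\in I}\|p_i-p\|$ always lies in $\mathrm{conv}\{p_i:i\in I\}\subseteq\Delta(S)$, since the Euclidean projection onto $\mathrm{conv}\{p_i\}$ weakly contracts the distance to every $p_i$; hence the defining problem may be treated as unconstrained, and optimality of $q$ is the subgradient condition $0\in\sum_{i\in I}\partial\|p_i-\cdot\|(q)$, i.e.\ there exist $v_i$ with $\|v_i\|\le 1$, $\sum_{i\in I}v_i=0$, and $v_i=(q-p_i)/\|q-p_i\|$ whenever $p_i\neq q$. The crux is that each $v_i$ still belongs to $\partial\|p_i^{\mu}-\cdot\|(q)$: if $\mu_i<1$ and $p_i\neq q$ then $q-p_i^{\mu}=(1-\mu_i)(q-p_i)$ points in the same direction, so $(q-p_i^{\mu})/\|q-p_i^{\mu}\|=v_i$; and if $\mu_i=1$ or $p_i=q$ then $p_i^{\mu}=q$, and $\partial\|p_i^{\mu}-\cdot\|(q)$ is the whole closed unit ball, which contains $v_i$. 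Therefore the same family $(v_i)$ certifies $0\in\sum_{i\in I}\partial\|p_i^{\mu}-\cdot\|(q)$, so $q$ minimizes the perturbed objective; by single-valuedness, $f((p_i^{\mu})_{i\in I})=q=f(\mathbf p)$.

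For the equal-wealth parimutuel prices, let $\rho=f(\mathbf p)$ and let $(\rho,\mathbf x)$ be an associated parimutuel equilibrium with equal wealth. By the first-order conditions recorded in the proof of Proposition~\ref{prop:EG}, for each $i$ there is $\lambda_i>0$ with $x_{is}>0\Rightarrow p_{is}=\lambda_i\rho_s$ and $x_{is}=0\Rightarrow p_{is}\le\lambda_i\rho_s$, and moreover $\rho\cdot x_i=1/n$ for all $i$ and $\sum_{i\in I}x_i=(1,\dots,1)$. I would show $(\rho,\mathbf x)$ is again a parimutuel equilibrium with equal wealth for $(p_i^{\mu})_{i\in I}$: set $\lambda_i':=(1-\mu_i)\lambda_i+\mu_i>0$; then $x_{is}>0$ yields $p_{is}^{\mu}=(1-\mu_i)\lambda_i\rho_s+\mu_i\rho_s=\lambda_i'\rho_s$, while $x_{is}=0$ yields $p_{is}^{\mu}\le(1-\mu_i)\lambda_i\rho_s+\mu_i\rho_s=\lambda_i'\rho_s$ since $1-\mu_i\ge 0$, and the wealth and market-clearing conditions are untouched. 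Hence the equilibrium conditions hold at prices $\rho$ for the perturbed beliefs, and by uniqueness of the equal-wealth parimutuel price, $f((p_i^{\mu})_{i\in I})=\rho=f(\mathbf p)$.

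The step I expect to demand the most care is the one the whole argument rests on: producing, in each case, a certificate of optimality for $q$ that is visibly stable under the perturbation --- the subgradient vectors $v_i$ for the median, the Lagrange multipliers $\lambda_i$ for the parimutuel price --- and then checking that it survives the boundary regimes $\mu_i=1$ (the $i$-th point collapses onto $q$, resp.\ agent $i$ becomes indifferent over all states) and $p_i=q$ or zero-coordinate beliefs. The remaining bookkeeping is routine, given the single-valuedness of the two rules cited above.
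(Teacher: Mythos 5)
Your proof is correct. For the equal-wealth parimutuel prices it is, up to notation, the paper's own argument: keep the equilibrium allocation $\mathbf{x}$ and the price $\rho$ fixed and rescale each agent's multiplier, then invoke uniqueness of the equilibrium price. (The paper writes the optimality conditions as $\rho_s\ge\alpha_i p_{is}$ with equality where $x_{is}>0$ and replaces $\alpha_i$ by $\alpha_i/(\lambda_i+(1-\lambda_i)\alpha_i)$; your $\lambda_i$ is $1/\alpha_i$ and your $\lambda_i'=(1-\mu_i)\lambda_i+\mu_i$ is exactly the reciprocal of the paper's new multiplier, with your check that $\lambda_i'>0$ covering the boundary cases.) For the geometric median the paper gives no argument at all, citing \cite{haldane1948note} for single-valuedness and \cite{gini1929di} for the invariance property, whereas you supply a complete, self-contained proof: the subgradient certificate $(v_i)_{i\in I}$ with $\sum_i v_i=0$ survives sliding each $p_i$ toward $q$ along the segment $[p_i,q]$, because the unit vector $(q-p_i^{\mu})/\|q-p_i^{\mu}\|$ is invariant under positive rescaling of $q-p_i$ and the subdifferential at a coincident point is the whole unit ball. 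Your preliminary observations --- that the unconstrained minimizer lies in $\mathrm{conv}\{p_i : i\in I\}\subseteq\Delta(S)$ so the simplex constraint is slack, and that oddness of $|I|$ yields uniqueness even when the (possibly perturbed) points become collinear --- are precisely what is needed to make the citation-free version airtight, since single-valuedness must be applied to the \emph{perturbed} profile. In short: same strategy as the paper (a perturbation-stable certificate of optimality plus single-valuedness), identical in substance for the parimutuel half, and more explicit than the paper for the median half.
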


\begin{proof}
That the geometric median satisfies these properties is classical. Its single-valuedness is due to \cite{haldane1948note} and the monotonicity property to \cite{gini1929di}.  See also \citet{brady2015spatial,brady2017spatial}.

For the equal-wealth parimutuel prices, single-valuedness is established in \citet{eisenberggale59,eisenberg1961aggregation}.  For the monotonicity property, recall that $\rho$ is an equilibrium price if there is an allocation $x_i$ for each agent and $\alpha_i > 0$ for each agent for which:

\begin{enumerate}
\item For all $i\in I$, $\rho_s\geq \alpha_i p_{is}$
\item For all $i\in I$, $x_{is} >0$ implies $\rho_s = \alpha_i p_{is}$.
\item For all $i\in I$, $\sum_s \rho_s x_{is}=\frac{1}{n}$.
\end{enumerate}

Now, fix $\lambda\in [0,1]^I$ and take $q_i = \lambda p_i + (1-\lambda)\rho$, where $\rho$ are the parimutuel prices.  Observe then that 
$q_{is} \leq [\lambda_i/\alpha_i  + (1-\lambda_i)]\rho_s$, and that  
$q_{is} = [\lambda_i/\alpha_i  + (1-\lambda_i)]\rho_s$ when $x_{is}>0$. So by replacing $\alpha_i$ with $\alpha_i'=\frac{\alpha_i}{\lambda_i + (1-\lambda_i)\alpha_i}$, all the conditions in the preceding are satisfied with the same $x_i$ and $\rho$. \end{proof}

\bigskip

However, the the equal-wealth parimutuel prices do not satisfy No Veto Power while the geometric median does.
\begin{example}
Suppose that $|S|=2$ and assume $p_{i}=p$ for all $i=1,\cdots,n-1$ and $0<p<p_{n}<\frac{1}{n}$. Then the aggregate belief given by the parimutuel equilibrium is
\[\min\left\{p,\cdots,p,p_{n},\frac{1}{n},\cdots,\frac{n-1}{n}\right\}=p_{n},\]
which violates No Veto Power.
\end{example}

The fact that the geometric median satisfies No Veto Power for $|I|\geq 3$ follows as
\begin{eqnarray*}
    \sum_{j\in I}\|q-p_{j}\| &=& (n-1)\|q-p\| +\|q-p_{i}\| \\
    &\ge & (n-2)\|q-p\|+\|p-p_{i}\| \\
    &> & \|p-p_{i}\| \\
    &=& \sum_{j\in I}\|p-p_{j}\| 
\end{eqnarray*}
when everybody but $i$ has belief $p$ and $i$ has belief $p_{i}$ and arbitrary $q\neq p$ was taken, 
where the first inequality follows from the triangle inequality and the second as $q \neq p$.

\subsubsection*{Necessary and Sufficient Condition for Nash Implementability}
Since the parimutuel equilibrium prices violate No Veto Power and they are the \textit{Nash equilibrium} outcomes under the symmetric linear opinion pool and log utility nonetheless, they must satisfy a necessary and sufficient condition for Nash implementability.

Below is a translation of the condition by \cite{moore1990nash} to the current setting. Given $K \subset \Delta(S)$, let 
$M(K,p_{i})$ denote the set of maximal elements in $K$ according to $R(p_{i})$.

\begin{description}
\item \textbf{Condition $\mu$}: There is a set $B \subset \Delta(S)$ and a profile of 
mappings $(C_{i})_{i \in I}$ with
$C_{i}: \Delta(S)^{I} \rightarrow 
2^{\Delta(S)}$ and $f(\mathbf{p}) \in M(C_{i}(\mathbf{p}),p_{i})$ 
for each $\mathbf{p} \in \Delta(S)^{I}$,  which satisfy the following properties:
For all $\mathbf{p}, \widetilde{\mathbf{p}} \in \Delta(S)^{I}$: 
\begin{description}
\item (i) if $f(\mathbf{p})\in \bigcap_{i \in I}M(C_{i}(\mathbf{p}),\widetilde{p}_{i})$ then 
$f(\widetilde{\mathbf{p}})=f(\mathbf{p}).$
\item (ii) if $q\in M(C_{i}(\mathbf{p}),\widetilde{p}_{i})\cap \bigcap_{j \ne i} M(B,\widetilde{p}_{j})$ then 
$f(\widetilde{\mathbf{p}})=q.$
\item (iii) if $q\in  \bigcap_{i \in I} M(B,\widetilde{p}_{i})$ then 
$f(\widetilde{\mathbf{p}})=q.$ 
\end{description}
\end{description}

Below is a natural analogue of the result by \cite{moore1990nash}, which can be proven by replacing the mechanism in the proof of Proposition \ref{prop:Nash} by the one analogous to Moore-Repullo's.
\begin{proposition}
Assume $|I|\ge 3$. Then a belief aggregation rule is Nash-implementable if and only if it satisfies Condition $\mu$.    
\end{proposition}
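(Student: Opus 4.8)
The plan is to transport the Moore--Repullo characterization to our setting, proving the two implications in a way that parallels the proof of Proposition~\ref{prop:Nash}: for necessity I would extract the witnesses $B$ and $(C_i)_{i\in I}$ from an implementing game form, and for sufficiency I would build a game form of the ``$n$ announcements plus an integer game'' type tailored to a given pair of witnesses. The only structure I will use from the economic model is that each $R(p_i)$ is a continuous preference on the compact set $\Delta(S)$ — it is represented by $q\mapsto\sum_{s\in S} p_{is}u_s(q)$ with $u_s$ continuous — so that $M(K,p_i)\neq\emptyset$ for every compact $K\subseteq\Delta(S)$; in particular nothing in the argument depends on the specific index $u$.

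\textbf{Necessity.} Suppose $\Gamma=((M^i)_{i\in I},g)$ implements $f$. I would set $B=g\big(\prod_{i\in I}M^i\big)$, the range of $g$; for each $\mathbf{p}$ fix some $\mathbf{m}(\mathbf{p})\in\NE(\Gamma,\mathbf{p})$, so $g(\mathbf{m}(\mathbf{p}))=f(\mathbf{p})$; and set $C_i(\mathbf{p})=\{g(\hat m^i,\mathbf{m}^{-i}(\mathbf{p})):\hat m^i\in M^i\}$, the set of outcomes agent~$i$ can force by a unilateral deviation from $\mathbf{m}(\mathbf{p})$. The equilibrium condition at $\mathbf{p}$ gives $f(\mathbf{p})\in M(C_i(\mathbf{p}),p_i)$. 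I would then check the three parts. For (i): if $f(\mathbf{p})\in\bigcap_i M(C_i(\mathbf{p}),\widetilde p_i)$, no $i$ can profitably deviate from $\mathbf{m}(\mathbf{p})$ under $\widetilde{\mathbf{p}}$, so $\mathbf{m}(\mathbf{p})\in\NE(\Gamma,\widetilde{\mathbf{p}})$ and $f(\widetilde{\mathbf{p}})=g(\mathbf{m}(\mathbf{p}))=f(\mathbf{p})$. For (ii): if $q\in M(C_i(\mathbf{p}),\widetilde p_i)\cap\bigcap_{j\neq i}M(B,\widetilde p_j)$, pick $\hat m^i$ with $g(\hat m^i,\mathbf{m}^{-i}(\mathbf{p}))=q$; in the profile $(\hat m^i,\mathbf{m}^{-i}(\mathbf{p}))$ agent~$i$ cannot improve because $q$ is $R(\widetilde p_i)$-maximal in $C_i(\mathbf{p})$, and no $j\neq i$ can improve because any deviation lands in $B$ while $q$ is $R(\widetilde p_j)$-maximal in $B$; hence this profile lies in $\NE(\Gamma,\widetilde{\mathbf{p}})$ and $f(\widetilde{\mathbf{p}})=q$. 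For (iii): if $q\in\bigcap_i M(B,\widetilde p_i)$, take any $\mathbf{m}$ with $g(\mathbf{m})=q$; every unilateral deviation stays in $B$, so $\mathbf{m}\in\NE(\Gamma,\widetilde{\mathbf{p}})$ and $f(\widetilde{\mathbf{p}})=q$.

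\textbf{Sufficiency.} Assume $|I|\ge 3$ and that $B,(C_i)_{i\in I}$ witness Condition~$\mu$. I would use message spaces $M^i=\Delta(S)^I\times B\times\mathbb{N}$, generic element $m^i=(\mathbf{p}^i,q^i,k^i)$, and define $g$ by three rules just as in Proposition~\ref{prop:Nash}: Rule~1, if all $\mathbf{p}^j$ equal a common $\mathbf{p}$, output $f(\mathbf{p})$; Rule~2, if all $\mathbf{p}^j$ except $i$'s equal a common $\mathbf{p}$, output $q^i$ when $q^i\in C_i(\mathbf{p})$ and $f(\mathbf{p})$ otherwise; Rule~3, otherwise run the integer/modulo game and output the winner's $q^i$. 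Everyone announcing $(\mathbf{p},f(\mathbf{p}),1)$ gives $f(\mathbf{p})$ by Rule~1, and a deviation by $i$ falls under Rule~2, yielding an outcome in $C_i(\mathbf{p})$ that cannot beat $f(\mathbf{p})$ since $f(\mathbf{p})\in M(C_i(\mathbf{p}),p_i)$; so $f(\mathbf{p})\in g(\NE(\Gamma,\mathbf{p}))$. For the reverse inclusion I would take $\mathbf{m}\in\NE(\Gamma,\mathbf{p})$, write $q=g(\mathbf{m})$, and split into the same cases. If Rule~1 is active with common profile $\widetilde{\mathbf{p}}$, each $i$ can realize any element of $C_i(\widetilde{\mathbf{p}})$ by switching into Rule~2, so $q\in\bigcap_i M(C_i(\widetilde{\mathbf{p}}),p_i)$ and (i) gives $f(\mathbf{p})=f(\widetilde{\mathbf{p}})=q$. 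If Rule~2 is active with common profile $\widetilde{\mathbf{p}}$ and odd agent~$j$, then $j$ can realize any element of $C_j(\widetilde{\mathbf{p}})$ by varying $q^j$, while each $i\neq j$ can realize any element of $B$ by moving $\mathbf{p}^i$ to a fresh value (which triggers Rule~3, possible since $|I|\ge 3$ keeps the profile non-unanimous) and winning the integer game; hence $q\in M(C_j(\widetilde{\mathbf{p}}),p_j)\cap\bigcap_{i\neq j}M(B,p_i)$ and (ii) gives $f(\mathbf{p})=q$. If Rule~3 is active, every agent can realize any element of $B$ by winning the integer game, so $q\in\bigcap_i M(B,p_i)$ and (iii) gives $f(\mathbf{p})=q$. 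In every case $g(\mathbf{m})=f(\mathbf{p})$, so $g(\NE(\Gamma,\mathbf{p}))=\{f(\mathbf{p})\}$.

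\textbf{Main obstacle.} I expect the difficulty to be bookkeeping rather than conceptual: I must verify that from each type of message profile the set of outcomes a given agent can unilaterally force is exactly the relevant set — $C_i(\cdot)$ for the incumbent odd agent, $B$ for the others, and a set topped by $q$ when $q$ is forced — and be careful at the seams between Rules~1--3, e.g.\ an agent who by changing $\mathbf{p}^i$ inadvertently makes some other profile unanimous or turns a different agent into the odd one, and at ties in the integer game. The role of $|I|\ge 3$ has to be pinned down exactly at the step where the non-deviators ``can reach all of $B$'': it is what guarantees a deviator can force Rule~3 while the announcement profile stays non-unanimous. I also need to keep the choice of $B$ coherent across the two directions (the range of $g$ for necessity, the given witness for sufficiency) and fix a convention for Rule~3 reports — restricting the message component to $B$, as above, or projecting onto $B$.
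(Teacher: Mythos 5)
Your proposal is correct and follows exactly the route the paper intends: the paper gives no details, noting only that the result follows by substituting a Moore--Repullo-style canonical mechanism into the proof of Proposition~\ref{prop:Nash}, and your necessity argument (attainable sets from an implementing game form) together with your three-rule mechanism is the standard way to carry that out. The one point to nail down is the bookkeeping issue you already flag: the Rule-2 step of sufficiency needs $C_i(\mathbf{p})\subseteq B$ so that the odd agent can force every element of $C_i(\mathbf{p})$ with a message $q^i\in B$; the paper's statement of Condition~$\mu$ omits this inclusion, but it is without loss since your necessity construction delivers it automatically.
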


We can apply the condition to the parimutuel equilibrium prices under the symmetric linear opinion pool and log utility by letting $B=\Delta(S)$ and
\[C_{i}(\mathbf{p})=\left\{\frac{1}{n}\left(\rho_{i}+\sum_{j\ne i}\rho_{j}(\mathbf{p})\right): \rho_{i} \in \Delta(S)\right\},\]
where $\rho_{j}(\mathbf{p})$ denotes the equilibrium belief reporting by $j$ when the belief profile is $\mathbf{p}$.

To see the that Condition (ii) is strictly weaker than No Veto Power, consider the two-state case in which all but individual $i$ have the same belief $p$. Then the presumption in Condition (ii) is met only when $\widetilde{p}_{i}=p$ or $\widetilde{p}_{i}\le p\le \frac{1}{n}$ or $\frac{n-1}{n}\le p \le \widetilde{p}_{i}$, and otherwise the conclusion $f(\widetilde{p}_{i},p,\cdots,p)=p$ need not hold.

\section{Concluding Remarks}
We conclude by discussing future directions.

A natural question is if we can have any nice strategy-proof rule other than dictatorship or a constant rule.
\begin{description}
\item \textbf{Strategy-proofness}: For all $\mathbf{p} \in \Delta(S)^{I}$, $i \in I$ and $q_{i} \in \Delta(S)$, it holds
\[f(\mathbf{p})R(p_{i})f(q_{i},\mathbf{p}_{-i}) .\]
\end{description}

We can verify that Monotonicity is necessary for 
strategy-proofness.
\begin{lemma}
Strategy-proofness implies Monotonicity.
\end{lemma}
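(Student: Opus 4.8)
The plan is to prove, equivalently by Lemma~\ref{lem:maskin}, that Strategy-proofness implies Recursive Invariance. Routing through Recursive Invariance rather than attacking Monotonicity head-on is the crucial move: Recursive Invariance only concerns pulling each agent's belief \emph{toward} the chosen aggregate $f(\mathbf{p})$, and such a move strictly tilts that agent's induced preference in favor of $f(\mathbf{p})$, which is exactly what is needed to get past the indifference obstacle described below.

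Fix $\mathbf{p}\in\Delta(S)^{I}$, write $a=f(\mathbf{p})$, fix $(\lambda_{i})_{i\in I}\in[0,1]^{I}$, set $q_{i}=(1-\lambda_{i})p_{i}+\lambda_{i}a$, and define $\mathbf{p}^{0}=\mathbf{p}$ and $\mathbf{p}^{k}=(q_{1},\dots,q_{k},p_{k+1},\dots,p_{n})$. I would prove $f(\mathbf{p}^{k})=a$ by induction on $k$; the base case is trivial and $k=n$ is the claim. For the step, assume $f(\mathbf{p}^{k-1})=a$, let $O_{k}=\{f(\hat q,\mathbf{p}^{k-1}_{-k}):\hat q\in\Delta(S)\}$ be agent $k$'s option set given the others' (unchanged) reports, and note $a\in O_{k}$ since $a=f(p_{k},\mathbf{p}^{k-1}_{-k})$. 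Writing $p\cdot u(q):=\sum_{s\in S}p_{s}u_{s}(q)$, Strategy-proofness at $\mathbf{p}^{k-1}$ gives $p_{k}\cdot u(a)\ge p_{k}\cdot u(c)$ for all $c\in O_{k}$, while strict concavity of $u$ on $X$ — equivalently the one-to-one correspondence between adopted beliefs and resulting portfolios of Section~\ref{sec:model} — makes $a$ the unique maximizer of $q\mapsto a\cdot u(q)$ over $\Delta(S)$, so $a\cdot u(a)>a\cdot u(c)$ for all $c\in O_{k}\setminus\{a\}$. Combining these with weights $1-\lambda_{k}$ and $\lambda_{k}$, for every $c\in O_{k}\setminus\{a\}$
\[
q_{k}\cdot u(a)=(1-\lambda_{k})\,p_{k}\cdot u(a)+\lambda_{k}\,a\cdot u(a)\ \ge\ (1-\lambda_{k})\,p_{k}\cdot u(c)+\lambda_{k}\,a\cdot u(c)=q_{k}\cdot u(c),
\]
strictly when $\lambda_{k}>0$. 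If $\lambda_{k}=0$ then $\mathbf{p}^{k}=\mathbf{p}^{k-1}$ and there is nothing to prove; if $\lambda_{k}>0$ then $a$ is the \emph{unique} maximizer of $R(q_{k})$ in $O_{k}$, and since Strategy-proofness at $\mathbf{p}^{k}$ makes $f(\mathbf{p}^{k})=f(q_{k},\mathbf{p}^{k-1}_{-k})$ a maximizer of $R(q_{k})$ in $O_{k}$, we conclude $f(\mathbf{p}^{k})=a$. This completes the induction, so $f\big(((1-\lambda_{i})p_{i}+\lambda_{i}f(\mathbf{p}))_{i\in I}\big)=f(\mathbf{p})$, and Lemma~\ref{lem:maskin} then yields Monotonicity.

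The step I expect to be the main obstacle is precisely the one this route avoids: if one tries to prove Monotonicity directly by changing coordinates one at a time, the argument only produces that $f(\mathbf{p}^{k})$ is \emph{indifferent} to $f(\mathbf{p})$ under the deviating agent's preference, and since the induced preferences $R(p_{i})$ over $\Delta(S)$ generically have nondegenerate indifference sets, this does not force $f(\mathbf{p}^{k})=f(\mathbf{p})$. Pulling beliefs toward $f(\mathbf{p})$ converts the relevant weak inequality into a strict one through the term $\lambda_{k}\,a\cdot u(a)>\lambda_{k}\,a\cdot u(c)$, restoring uniqueness of the optimum and hence equality. The only other point needing a word of care is beliefs on the boundary of $\Delta(S)$ (no full support), but the unique-maximizer fact used above holds there as well, so no separate treatment is required.
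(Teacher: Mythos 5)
Your proof is correct, but it takes a genuinely different route from the paper's. The paper attacks Monotonicity head-on: fixing $i$ and supposing $f(\widetilde{p}_{i},\mathbf{p}_{-i})\neq f(\mathbf{p})$, it derives a violation of strategy-proofness either at $\mathbf{p}$ or at $(\widetilde{p}_{i},\mathbf{p}_{-i})$, resolving the indifference problem you describe by appealing to the geometric fact that the relevant lower contour sets of $R(p_{i})$ and $R(\widetilde{p}_{i})$ through $f(\mathbf{p})$ can only share the point $f(\mathbf{p})$ itself; it then iterates coordinate by coordinate, exactly as you do. You instead prove that strategy-proofness implies Recursive Invariance and invoke the equivalence of Lemma~\ref{lem:maskin} (which is legitimate and non-circular, since that lemma is proved independently of strategy-proofness). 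The payoff of your detour is the one you identify: writing $q_{k}=(1-\lambda_{k})p_{k}+\lambda_{k}a$ lets you combine the weak optimality of $a=f(\mathbf{p})$ in the option set $O_{k}$ under $R(p_{k})$ (strategy-proofness at $\mathbf{p}^{k-1}$) with the strict, unique optimality of $a$ under $R(a)$ (which follows from the uniqueness of the socially optimal portfolio and the injectivity of $p\mapsto u(p)$ asserted in the paper's first proposition), so that $a$ becomes the unique $R(q_{k})$-maximizer of $O_{k}$ and $f(\mathbf{p}^{k})=a$ is forced. This converts the paper's somewhat delicate geometric step into a clean strictness argument, at the cost of leaning on Lemma~\ref{lem:maskin} and on the unique-maximizer property of the portfolio problem; the paper's direct argument is shorter but terser about why the indifference case cannot arise. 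Both proofs share the one-coordinate-at-a-time induction.
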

\begin{proof}
Suppose $\mathbf{p}, \widetilde{\mathbf{p}} \in \Delta(S)^{I}$ meet the presumption of Monotonicity, and 
suppose $f(\widetilde{p}_{i}, \mathbf{p}_{-i}) \ne f(\mathbf{p})$. Then there are two cases.

Case 1: Suppose $f(\widetilde{p}_{i}, \mathbf{p}_{-i}) \in L(f(\mathbf{p}),p_{i})$, then because 
$L(f(\mathbf{p}),p_{i}) \cap 
L(f(\mathbf{p}),\widetilde{p}_{i})=\{f(\mathbf{p})\}$ holds in this domain it must be that 
\[f(\mathbf{p}) P(\widetilde{p}_{i}) f(\widetilde{p}_{i}, \mathbf{p}_{-i}) ,\]
which is a violation of Strategy-proofness at $(\widetilde{p}_{i}, \mathbf{p}_{-i})$.

Case 2: Suppose $f(\widetilde{p}_{i}, \mathbf{p}_{-i}) \notin L(f(\mathbf{p}),p_{i})$, 
then it means
\[  f(\widetilde{p}_{i}, \mathbf{p}_{-i}) P(p_{i}) f(\mathbf{p}),\]
which is a violation of Strategy-proofness at $\mathbf{p}$.

Hence $f(\widetilde{p}_{i}, \mathbf{p}_{-i}) = f(\mathbf{p})$.
By repeated application of this argument we obtain 
$f(\widetilde{\mathbf{p}}) = f(\mathbf{p})$
\end{proof}

\bigskip

As Monotonicity is equivalent to Recursive Invariance, we can say that if a rule does not satisfy Recursive Invariance it fails to be strategy-proof. A sufficient condition for strategy-proofness remains to be an open question, though. The current domain does not satisfy the richness condition (R1) by \cite{klaus2013relation}, which guarantees sufficiency of Monotonicity for strategy-proofness.

Since we choose a single point from a convex subset of an Euclidian space, our problem falls in the model of choosing a pure public good as considered by  \cite{zhou1991impossibility}, who showed that any strategy-proof rule can have no more than one-dimensional range. In our context, this means we have possibility only in the two-state case in which the median rule works. The current domain is strictly smaller than the one considered by Zhou, however, since we assume that everybody has a fixed and common vNM index so that specifying a belief determines a whole preference uniquely, while Zhou allows all convex preferences or at least needs to allow all quadratic preferences. Since preferences cannot be separable over the budget set which is not a product set, the possibility results based on separable preferences (such as \cite{border1983straightforward}) do not apply.  Thus, possibility/impossibility remains to be an open question.

\bibliographystyle{ecta}
\bibliography{references}

\end{document}